\newcommand{\myparagraph}[1]{{\smallskip\noindent{\bf #1}}}
\newcommand{\hatZ}{{\hat Z}}
\newcommand{\sqrts}{{\sqrt s}}
\newcommand{\tildeX}{{\tilde X}}
\newcommand{\tildeZ}{{\tilde Z}}
\newcommand{\barz}{{\bar z}}
\newcommand{\barZ}{{\bar Z}}
\newcommand{\calA}{{\mathcal A}}
\newcommand{\calB}{{\mathcal B}}
\newcommand{\calC}{{\mathcal C}}
\newcommand{\calD}{{\mathcal D}}
\newcommand{\calP}{{\mathcal P}}
\newcommand{\starx}{x^\ast}
\newcommand{\starz}{z^\ast}
\newcommand{\sqrtstarz}{\sqrt{z^\ast}}
\newcommand{\braced}[1]{{ \left\{ #1 \right\} }}
\newcommand{\angled}[1]{{ \left\langle #1 \right\rangle }}
\newcommand{\ceiling}[1]{{ \lceil #1 \rceil }}
\newcommand{\floor}[1]{{ \lfloor #1 \rfloor }}
\newcommand{\half}{{\mbox{$\frac{1}{2}$}}}
\newcommand{\onefourth}{{\mbox{$\frac{1}{4}$}}}
\newcommand{\oneeighth}{{\mbox{$\frac{1}{8}$}}}
\newcommand{\avgn}{{\mbox{$1\over n$}}}
\newcommand{\minIP}{\mbox{\sf MinIP}}
\newcommand{\minLP}{\mbox{\sf MinLP}}
\newcommand{\maxIP}{\mbox{\sf MaxIP}}
\newcommand{\maxLP}{\mbox{\sf MaxLP}}
\newcommand{\avgIP}{\mbox{\sf AvgIP}}
\newcommand{\avgLP}{\mbox{\sf AvgLP}}
\newcommand{\XthreeC}{\mbox{\sf X3C}}
\newcommand{\PP}{\mbox{$\mathbb{P}$}}
\newcommand{\NP}{\mbox{$\mathbb{NP}$}}
\newcommand{\RP}{\mbox{$\mathbb{RP}$}}
\newcommand{\DTIME}{\mbox{$\mathbb{DTIME}$}}
\newcommand{\RCM}{\mbox{\rm RCM}}
\newcommand{\RCMtwo}{\mbox{\rm RCM2}}
\newcommand{\RDM}{\mbox{\rm RDM}}
\newcommand{\RCA}{\mbox{\rm RCA}}
\newcommand{\RCAtwo}{\mbox{\rm RCA2}}
\newcommand{\malymin}{{\mbox{\tiny\it min}}}
\newcommand{\malymax}{{\mbox{\tiny\it max}}}
\newcommand{\malyavg}{{\mbox{\tiny\it avg}}}
\newcommand{\malyRCM}{{\mbox{\tiny\rm RCM}}}
\newcommand{\malyRCMtwo}{{\mbox{\tiny\rm RCM2}}}
\newcommand{\malyRDM}{{\mbox{\tiny\rm RDM}}}
\newcommand{\malyRCAtwo}{{\mbox{\tiny\rm RCA2}}}
\newcommand{\Cmin}[1]{\mbox{$\calC^{#1}_\malymin$}}
\newcommand{\Dmax}[1]{\mbox{$\calD^{#1}_\malymax$}}
\newcommand{\Cavg}[1]{\mbox{$\calC^{#1}_\malyavg$}}
\newcommand{\Davg}[1]{\mbox{$\calD^{#1}_\malyavg$}}
\newcommand{\BCPCmin}{\mbox{\mbox{\sf BCP\_}\Cmin{}}}
\newcommand{\BCPDmax}{\mbox{\mbox{\sf BCP\_}\Dmax{}}}
\newcommand{\BCPCavg}{\mbox{\mbox{\sf BCP\_}\Cavg{}}}
\newcommand{\BCPDavg}{\mbox{\mbox{\sf BCP\_}\Davg{}}}
\newcommand{\Prob}{\mbox{\rm Pr}}
\newcommand{\Exp}{\mbox{\rm Exp}}
\newcommand{\dnaseq}[1]{{\mbox{\rm\small\tt #1}}}
\begin{document}

\markboth{Qi Fu, Elizabeth Bent, James Borneman, Marek Chrobak, and Neal E. Young}
{Algorithmic Approaches to Selecting Control Clones
        in DNA Array Hybridization Experiments}

%
\catchline{}{}{}{}{}
%

\title{
ALGORITHMIC APPROACHES TO SELECTING CONTROL CLONES
	IN DNA ARRAY HYBRIDIZATION EXPERIMENTS
\thanks{A preliminary version of this paper appeared in the
Proceedings of Asia-Pacific Bioinformatics Conference 2007.
Journal version published as 
Journal of Bioinformatics and Computational Biology 5(4) 937--961, 2007,
DOI 10.1142/S0219720007002977, World Scientific Publishing Company}}

\author{%
QI FU%
\footnotemark[1]
,
ELIZABETH BENT%
\footnotemark[2]
,
JAMES BORNEMAN%
\footnotemark[2]
,\\
MAREK CHROBAK%
\footnotemark[1]
\and
NEAL E. YOUNG%
\footnotemark[1]
}

\address{Department of Computer Science \footnotemark[1] ,
	    Department of Plant Pathology \footnotemark[2] ,\\
            University of California,\\
            Riverside, CA 92521. U.S.A.\\
	    (qfu, marek, neal)@cs.ucr.edu \footnotemark[1]\\
	    (bente, james.borneman)@ucr.edu \footnotemark[2]
	    }

\maketitle



\begin{abstract}
We study the problem of selecting control clones
in DNA array hybridization experiments. The
problem arises in the OFRG
method for analyzing microbial communities.
The OFRG method performs classification of rRNA
gene clones using binary fingerprints created from
a series of hybridization experiments, where
each experiment consists of
hybridizing a collection of arrayed clones with
a single oligonucleotide probe.
This experiment produces analog signals, one
for each clone, which then need to be classified,
that is, converted into binary values $1$ and $0$
that represent hybridization and non-hybridization events.
In addition to the sample rRNA gene clones, the array
contains a number of control clones needed to calibrate the
classification procedure of the hybridization signals. These
control clones must be selected with care to
optimize the classification process.
We formulate this as a
combinatorial optimization problem called
\emph{Balanced Covering}. We prove that the
problem is {\NP}-hard, and we show some results
on hardness of approximation.
We propose approximation algorithms
based on randomized rounding and
we show that, with high probability,
our algorithms approximate well the optimum solution.
The experimental results confirm that the
algorithms find high quality control clones.
The algorithms have been implemented and are
publicly available as part of the software package
called CloneTools.
\end{abstract}

\keywords{Control selection; DNA array; balanced covering; 
linear programming; randomized rounding.}

\newpage

\section{Introduction}

\myparagraph{Background.}
We study the problem of selecting control clones
for DNA array hybridization experiments.
The specific version of the problem that we
address arises in the context of the OFRG
(Oligonucleotide Fingerprinting of Ribosomal RNA Genes)
method,
that we describe below, although our approach is
also relevant to other applications of DNA
microarray technology.

OFRG (\cite{Andres04}, \cite{Jampa05}, \cite{Valin04}, 
\cite{Valin02a}, \cite{Valin02b})
is a technique for analyzing microbial communities
that classifies rRNA gene clones into taxonomic
clusters based on binary fingerprints
created from hybridizations with
a collection of oligonucleotide probes.
More specifically, in OFRG, clone libraries
from a sample under study (e.g., fungi or bacteria
from an environmental sample) are constructed
using PCR primers. These cloned
rRNA gene fragments are immobilized on nylon membranes and
then subjected to a series of hybridization experiments,
with each experiment using a single radiolabeled
DNA oligonucleotide probe.
This experiment produces analog signals, one
for each clone, which then need to be classified,
that is, converted into binary values $1$ and $0$
that represent hybridization and non-hybridization events.
Overall,
this process creates a hybridization fingerprint
for each clone, which is a vector of binary values
indicating which probes bind with this clone and
which do not. The clones are then
identified by clustering their hybridization fingerprints
with those of known sequences and by nucleotide sequence
analysis of representative clones within a cluster.

In addition to sample clones, the array contains a
number of \emph{control clones}, with known nucleotide sequences,
used to calibrate the classification procedure
of hybridization signals. Consider a hybridization
experiment with a probe $p$. Signal intensities from
its hybridizations with the control clones produce two
distributions: one from control clones that match $p$
(e.g., they contain $p$ or $p$'s reverse complement and thus
should hybridize with it)
and the other from control clones that do not.
This information is used to determine, via 
appropriate statistical techniques,
$p$'s signal intensity threshold $t$. Once $t$
has been determined, we can classify signal
intensities for sample clones as follows: 
signals above $t$ are
interpreted as $1$'s (hybridization events)
while those below are represented by $0$'s (non-hybridizations).

The quality of information obtained
from hybridizations depends critically on the
accuracy of the signal classification process. In particular,
the control clones should be more or less equally
distributed in terms of their ability to bind or not
bind with each probe from a given probe set.
In prior OFRG work, control clones were selected arbitrarily,
often producing control clones with
very skewed distribution of binding/non-binding with
some probes. As an example, from a set of 100 control clones,
only two might bind with a specific probe. The signal
classification for this probe would be very unreliable,
as it would be based on signal intensities from
hybridization with only two control clones.


\myparagraph{Problem formulation.}
Our control-clone selection problem can be then formulated as follows:
We are given a collection $C$ of candidate control clones
and a set $P$ of oligonucleotide probes to be used in
the hybridization experiments.
From among the candidate clones in $C$,
we want to select a set $D\subseteq C$ of $s$ control clones
such that each probe in $P$ hybridizes
with roughly half of the clones in $D$.

This gives rise to a combinatorial optimization
problem that we call
\emph{Balanced Covering}\footnote{
There have been some discussions on the
\emph{Balanced Set Cover} (see \cite{berger1989ena}, \cite{gargano2003mgo})
problem, however, they are not directly related to
\emph{Balanced Covering} problem discussed
in this paper.}.
The instance is given as a pair $\angled{G,s}$, where
$G = (C,P,E)$ is a bipartite graph and
$s\le |C|$ is an integer.
$C$ represents the clone set, $P$ is the probe set,
and the edges in $E$ represent potential
hybridizations between clones and probes,
that is, $(c,p)\in E$ iff $c$ contains $p$ or the
reverse complement of $p$.
For $p\in P$ and $D\subseteq C$,
let $\deg_D(p)$ be the number of
neighbors of $p$ in $D$ (that is, the number of
clones in $D$ that hybridize with $p$).
Throughout the paper, unless stated otherwise,
by $m$ we will denote the cardinality of $C$ and by $n$ the
cardinality of $P$.


\myparagraph{Example.} 
We illustrate the concept with a small example.
(The realistic data sets are typically considerably larger.) Let
$P = \braced{p_1,p_2,...,p_7}$ be the following probe set:

\begin{center}
\begin{tabular}{c c c c c c c}
$p_1$ & $p_2$ & $p_3$ & $p_4$ & $p_5$ & $p_6$ & $p_7$ \\
\dnaseq{CTGGC} & \dnaseq{TACAT} & \dnaseq{CGGCG} & \dnaseq{GCTGG}
    & \dnaseq{CGCTA} & \dnaseq{GCCTA} & \dnaseq{ATACA}
\end{tabular}
\end{center}

The set of control clones 
$C = \braced{c_1,c_2,...,c_8}$ and the resulting bipartite graph $G$
are shown below ($G$ is represented by its $C\times P$
adjacency matrix).

\noindent
\begin{center}
\setlength{\tabcolsep}{3pt}
\setlength{\columnsep}{0mm}
\begin{tabular}{llr|lccccccc} 
    & &\ &\ & 
			$p_1$  & $p_2$ & $p_3$ & $p_4$ & $p_5$ & $p_6$ & $p_7$
	\\ \hline
$c_1$ &
	\dnaseq{ATTGAACG\underline{CTGGC}GGCAGGCCTAACACATGCAAGTCGGACGGTAG}
			& &  &
				1 & 0 & 0 & 1 & 0 & 1 & 0
				 \\
$c_2$ &
	\dnaseq{GACGAACA\underline{GCCAG}GGCGTGCTTCGGCGATGCAAGTCGAGCGCTAA}
				& &  &
				1 & 0 & 1 & 0 & 1 & 0 & 0
				 \\
$c_3$ &
	\dnaseq{ATTTTACG\underline{CTGGC}GGCAGGCCTAACACATGCAAGTCGAAAAGTAG}
			& &  &
				1 & 0 & 0 & 1 & 0 & 1 & 0 
				\\
$c_4$ &
	\dnaseq{ACGCTAGCGGGATGCTTTACACATGCAAGTCGAACGGCAATACAT}
			& &  &
				0 & 1 & 0 & 0 & 1 & 0 & 1
	\\
$c_5$ & 
	\dnaseq{ACGAACG\underline{CTGGC}GGCGTGCCTAATACATGCAAGTCGAACGCTTCT}
			& &  &
				1 & 1 & 1 & 1 & 0 & 1 & 1 
	\\
$c_6$ &
	\dnaseq{ACGAACG\underline{GCCAG}GGCGTGGATTAGGCATGCAACGGCGACGCTGGA}
			& & &
				1 & 0 & 1 & 1 & 0 & 1 & 0 
	\\
$c_7$ &
	\dnaseq{GATGAACGCTAGCGGCAGGCTTAATACATGCAAGTCGAACGGCAG}
			& & & 
				0 & 1 & 0 & 0 & 1 & 0 & 1 
	\\
$c_8$ &
	\dnaseq{GACGAACG\underline{CTGGC}GGCGTGCTTAACACATGCAAGTCGAACGGAAA}
			& & &
				1 & 0 & 1 & 1 & 0 & 0 & 0
\end{tabular}

\end{center}

\smallskip

In $G$, we have an edge $(c_i,p_j)$ if $p_j$ matches $c_i$, that is,
$p_j$ or its reverse complement appears in $c_i$. 
For example, $p_1$ appears in $c_1$, $c_3$, $c_5$ and $c_8$,
and its reverse complement {\tt GCCAG}
appears in $c_2$ and $c_6$.
(These occurrences of $p_1$ are underlined.)
There is no edge $(c_4,p_1)$ and $(c_7,p_1)$ since
$c_4$ and $c_7$ do not contain either $p_1$ or $p_1$'s reverse complement.

Now suppose that we want to select $s = 6$ control clones from $C$. The
probe degree sequence with respect to $D_1 = \braced{c_1,c_2,c_3,c_5,c_6,c_8}$
is $(6,1,4,5,1,4,1)$, while the probe degree sequence with respect to
$D_2 = \braced{c_2,c_4,c_5,c_6,c_7,c_8}$ is $(4,3,4,3,3,2,3)$. Thus
$D_2$ would be considered a better set of control clones, since more
degrees are closer to $s/2 = 3$.

\medskip

Generally, as mentioned earlier,
our goal is to find a set $D\subseteq C$
of cardinality $s$ such that,
for each $p\in P$, $\deg_D(p)$ is close to $s/2$.
Several objective functions can be studied.
To measure the deviation from the perfectly
balanced cover, for a given probe $p$, we can compute either
$\min\braced{\deg_D(p),s-\deg_D(p)}$ or
$|\deg_D(p)-s/2|$. The objective function can
be obtained by considering the average of these
values or the worst case over all probes.
This gives rise to four objective functions:
\begin{eqnarray*}
\mbox{\rm maximize}\quad
\Cmin{}(D) &=& \min_{p\in P}\min\braced{\deg_D(p),s-\deg_D(p)}
                \\
\mbox{\rm maximize}\quad
\Cavg{}(D) &=& \avgn \sum_{p\in P}\min\braced{\deg_D(p),s-\deg_D(p)}
                \\
\mbox{\rm minimize}\quad
\Dmax{}(D) &=& \max_{p\in P}|\deg_D(p)-s/2|
                \\
\mbox{\rm minimize}\quad
\Davg{}(D) &=& \avgn \sum_{p\in P}|\deg_D(p)-s/2|
\end{eqnarray*}
where each function needs to be optimized over all choices
of $D$. There are certain relations among these functions,
for an instance, maximizing $\Cmin{}(D)$ and $\Cavg{}(D)$
is equivalent to minimizing $\Dmax{}(D)$ and $\Davg{}(D)$,
respectively, since $\Dmax{}(D) = s/2 - \Cmin{}(D)$,
and $\Davg{}(D) = s/2 - \Cavg{}(D)$.
Throughout the paper, the four
optimization problems corresponding to these
functions will be denoted by
{\BCPCmin}, {\BCPCavg}, {\BCPDmax}, and {\BCPDavg}.

Let $\angled{G,s}$ be an instance of Balanced
Covering. By $\Cmin{\ast}(G,s) = \max_D \Cmin{}(D)$ we denote
the optimal value of $\Cmin{}(D)$.
If $\calA$ is an algorithm for {\BCPCmin},
then $\Cmin{\calA}(G,s)$ denotes the
value computed by $\calA$ on input $\angled{G,s}$.
We use similar notations, $\Cavg{\ast}(G,s)$,
$\Cavg{\calA}(G,s)$, etc.,
for all the other objective functions introduced above.


\myparagraph{Results.}
In this paper we show several analytical and
experimental results on Balanced Covering.
In Section~\ref{sec: np-complete}
we prove that all versions of Balanced Covering
are {\NP}-hard. In particular, it is
{\NP}-complete to decide whether there is
a perfectly balanced cover with $s$ clones,
as well as to decide whether
there is a size-$s$ cover
where each probe is covered by at least one
but not all clones.
These results immediately imply
that (unless {\PP}={\NP}), there are no
polynomial-time approximation algorithms for
{\BCPDmax}, {\BCPDavg}, and {\BCPCmin}.

Stronger hardness-of-approximation results are shown in
Section~\ref{sec: hardness app}. For example,
for {\BCPDavg}, we show that approximating the optimum is
hard even if we allow randomization and an additive term in the
performance bound.
More specifically, we prove that, unless $\RP = \NP$, there is
no randomized polynomial-time algorithm $\calA$ that
for some constants $\alpha > 0$ and $0 < \epsilon < 1$ satisfies
$\Exp[\Davg{\calA}] \le \alpha \Davg{\ast} + \beta \avgn (mn)^{1-\epsilon}$.
($\RP$ is the class of decision problems that can be solved in
randomized polynomial time with one-side error.)
For {\BCPCmin}, we show that there is no polynomial-time algorithm
that computes a solution with the objective value at least
$\Cmin{\ast} - (1-\epsilon)\ln n$ or $\epsilon \Cmin{\ast}$,
unless {\NP} has slightly superpolynomial-time algorithms.
Our results on hardness of approximation are
summarized in Table~\ref{tab1}.

Then, in Section~\ref{sec: algorithms},
we propose a polynomial-time randomized
rounding algorithm {\RCM} for {\BCPCmin}.
The algorithm solves the linear relaxation of
the integer program for {\BCPCmin}, and
then uses the solution to randomly pick
an approximately balanced cover.
We show that,
with probability at least $\half$, {\RCM}'s solution
has objective value at least
$\Cmin{\ast}- O(\sqrt{\Cmin{\ast} \ln n} + \sqrts)$.

Algorithm {\RCM} performs well for input
instances where the optimum is relatively
large, but its performance bound
can be improved further for instances where the
optimum is small compared to $s$.
In Section~\ref{sec: RCMtwo algorithm}, we present
another algorithm called {\RCMtwo} that,
with probability at least $\half$, computes a solution
with objective value at least
$\Cmin{\ast} - O(\sqrt{\Cmin{\ast} \ln n})$.
(Although the asymptotic
approximation bound of {\RCM} is not as
good as that of {\RCMtwo}, we include {\RCM} in the
paper because, according to our experiments
discussed in Section~\ref{sec: experiments},
it outperforms {\RCMtwo} in practice.)

We also study problems {\BCPDmax} and {\BCPCavg}, for which
we develop some polynomial-time
randomized rounding algorithms ({\RDM} for {\BCPDmax}, and
two algorithms {\RCA} and {\RCAtwo} for {\BCPCavg}.)
These results are summarized in Table~\ref{tab1}.

\begin{table}
\caption{Hardness results and algorithms}
\label{tab1}
\begin{center}
\begin{minipage}{\textwidth}
\setlength{\tabcolsep}{4pt}
\setlength{\columnsep}{0mm}
{\footnotesize
\begin{tabular}{|l||p{5cm}|p{5.3cm}|}
  \hline
   & Hardness Results & Randomized Polynomial-Time Alg. \\
  \hline\hline
        {\BCPCmin}
	&
        No polynomial-time algorithm $\calA$ satisfies 
	$\Cmin{\calA} \ge \Cmin{\ast} - (1-\epsilon)\ln n$
	or $\Cmin{\calA} \ge \epsilon \Cmin{\ast}$,
        unless $\NP \subseteq \DTIME(n^{O(\log\log n)})$.  
	&
        Algorithm {\RCMtwo} s.t.,
        $\Cmin{\malyRCMtwo} \ge \Cmin{\ast} - O\left(\sqrt{\Cmin{\ast} \ln n}\right)$,
	with probability at least $\half$.
	\\
  \hline
        {\BCPCavg}
	&
        No randomized polynomial-time algorithm $\calA$ satisfies
        $\Exp[\Cavg{\calA}] \ge \Cavg{\ast} - \beta \avgn (mn)^{1-\epsilon}$, unless $\RP = \NP$.
        &
        Algorithm {\RCAtwo} s.t.,
        $\Exp[\Cavg{\malyRCAtwo}] \ge \Cavg{\ast} - O\left(\sqrt{\Cavg{\ast}} \right)$.
	\\
  \hline
        {\BCPDmax}
	&
        No polynomial-time algorithm that $\calA$
        satisfies $\Dmax{\calA} \le \Dmax{\ast} + (1-\epsilon)\ln n$,
        unless $\NP \subseteq \DTIME(n^{O(\log\log n)})$.
        &
        Algorithm {\RDM} s.t.,
        $\Dmax{\malyRDM} \le \Dmax{\ast} + O\left( \sqrt{s\ln n} \right)$,
	with probability at least $\half$
	\\
  \hline
        {\BCPDavg}
	&
        No randomized polynomial-time algorithm $\calA$ satisfies
        $\Exp[\Davg{\calA}] \le \alpha \Davg{\ast} + \beta \avgn (mn)^{1-\epsilon}$, unless $\RP = \NP$.
	&
        \\
  \hline
\multicolumn{3}{l}{$\alpha, \beta > 0$ and $0< \epsilon <1$ are any constants.}
\end{tabular}
}
\end{minipage}
\end{center}
\end{table}

In Section~\ref{sec: experiments}, we present the results of our
experimental studies, where we tested algorithms {\RCM}, {\RCMtwo} and {\RDM}
on both synthetic and real data sets.
According to this study,
solutions found by these algorithms
are very close to the optimal solution of
their corresponding linear program,
especially on real data sets. For example,
in $92.8\%$ of our real data sets,
{\RCM} found the solution with value at least $97\%$
of the solution from the linear program.

Algorithm {\RCM} has been implemented and
is publicly available at the OFRG website
as part of the CloneTools software package, see
{\tt http://algorithms.cs.ucr.edu/OFRG/}.


\myparagraph{Relation to other work.}
We are not aware of any other work on the Balanced Covering
problem studied in this paper.

Note that OFRG differs from other
array-based analysis approaches that, typically,
involve a single microarray experiment where one
clone of interest is hybridized against a collection of arrayed
probes, each targeting a specific sequence.
These experiments include control clones as well, but these
control clones are used
to test whether they bind as predicted to particular
microarray probes
(see \cite{Johan00,Yu03}, for example).
In contrast, OFRG uses a small set of
probes (roughly 30-50) to coordinately distinguish a much
larger set of sequences (for example, all bacterial
rRNA genes). Each probe is used in one hybridization
experiment, and the unknown DNA clone sequences
are immobilized on the array.


\section{NP-Completeness}
\label{sec: np-complete}

We first show that all four versions of
Balanced Covering studied in this paper are {\NP}-hard.
In fact, we give two proofs of
{\NP}-hardness, as each will lead to
different results on hardness of approximation
in the next section.

Given a
bipartite graph $G = (C,P,E)$ and an even integer $s$,
define a \emph{perfectly balanced cover} in $G$ to be
a subset $D\subseteq C$
with $|D|=s$ such that $\deg_D(p) = s/2$ for each $p\in P$.
Similarly, we define a \emph{size-$s$ cover} to be a subset
$D\subseteq C$ with $|D|=s$ such that
$1\le \deg_D(p) \le s-1$ for each $p\in P$.


\begin{theorem}\label{thm: np-hard}
The following decision problem is {\NP}-complete:
``Given a bipartite graph $G = (C,P,E)$ and an even integer $s$,
is there a perfectly balanced cover in $G$?''
Consequently, {\BCPCmin}, {\BCPCavg}, {\BCPDmax} and {\BCPDavg}
are {\NP}-hard.
\end{theorem}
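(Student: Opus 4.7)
The plan is to reduce from Exact Cover by 3-Sets ({\XthreeC}), a classical {\NP}-complete problem: given a ground set $U$ with $|U|=3k$ and a family $S$ of $3$-subsets of $U$, does there exist $S'\subseteq S$ of size $k$ partitioning $U$? Membership of the perfect-balance decision problem in {\NP} is immediate: given a candidate $D$, we verify $|D|=s$ and count $\deg_D(p)$ for each probe $p$ in polynomial time.

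Given an {\XthreeC} instance $(U,S)$ with $k\ge 3$ (small cases can be handled separately), I would construct $G=(C,P,E)$ by letting $P=U$ and making $C$ consist of three types of vertices: a \emph{set-clone} $c_i$ for each $S_i\in S$, adjacent precisely to the three elements of $S_i$; exactly $k-1$ \emph{full} clones, each adjacent to every probe; and a single \emph{empty} clone with no incident edges. Set $s=2k$. The claim is that $G$ admits a perfectly balanced cover of size $s$ if and only if the {\XthreeC} instance is a yes-instance.

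The forward direction is short: given an exact cover $S'$ with $|S'|=k$, take $D$ to be the corresponding set-clones together with all $k-1$ full clones and the empty clone; then $|D|=2k$ and each probe $p$ satisfies $\deg_D(p)=1+(k-1)=k=s/2$. For the reverse direction, let $D$ be a perfectly balanced cover and write $r,f,e$ for the number of set-clones, full clones, and empty clones in $D$, so $r+f+e=2k$, $0\le f\le k-1$, and $0\le e\le 1$. Each probe satisfies $k=\deg_D(p)=x_p+f$, where $x_p$ is the number of set-clones in $D$ containing $p$; hence $x_p=k-f=:t$ is the same constant for every $p$. Summing over all $3k$ probes gives $3r=3kt$, so $r=kt$, and substituting into $r+f+e=2k$ yields $e=k-t(k-1)$. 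The bounds $e\in\{0,1\}$ and $k\ge 3$ then force $t=1$, whence $r=k$, $f=k-1$, $e=1$, and $x_p=1$ for every $p$; the $k$ chosen set-clones form an exact cover.

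The main obstacle I anticipate is this reverse-direction case analysis: one must rule out alternative distributions $(r,f,e)$, particularly $t=2$, which would correspond to selecting $2k$ set-clones that cover each probe exactly twice. Capping the number of full and empty clones at $k-1$ and $1$ respectively, combined with $k\ge 3$, is precisely what eliminates these cases. Once the decision problem is shown to be {\NP}-complete, {\NP}-hardness of {\BCPCmin}, {\BCPCavg}, {\BCPDmax}, and {\BCPDavg} follows at once, since a perfectly balanced cover is exactly a solution achieving $\Cmin{}(D)=\Cavg{}(D)=s/2$, equivalently $\Dmax{}(D)=\Davg{}(D)=0$; a polynomial-time algorithm computing the optimal value for any of these four objectives would decide perfect balance.
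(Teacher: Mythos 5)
Your proof is correct and follows essentially the same route as the paper: a reduction from {\XthreeC} that pads the chosen triples with clones adjacent to every probe so that the degree constraint $\deg_D(p)=s/2$ forces exactly one triple-clone per element. The only differences are cosmetic — you use $k-1$ full clones plus one isolated clone with $s=2k$ (the paper uses $m-2$ full clones with $s=2m-2$) and argue the reverse direction by the per-probe identity $x_p=k-f$ rather than a global edge count — and both arguments are sound.
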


\begin{proof}
The proof is by a polynomial-time
reduction from {\XthreeC} (Exact Cover by 3-Sets),
which is known to be {\NP}-complete (see \cite{Garey79}, for
example).  The instance of {\XthreeC} consists of a finite set
$X$ of $3m$ items, and a collection $T$ of $n$ $3$-element
subsets of $X$ that we refer to as \emph{triples}.
We assume that $n\ge m\ge 2$. The objective is to
determine whether $T$ contains an \emph{exact cover of $X$},
that is a sub-collection $T'\subseteq T$ such that
every element of $X$ occurs in exactly one triple in $T'$.

The reduction is defined as follows.
Given an instance $\angled{X,T}$ of {\XthreeC} above, we construct
an instance $\angled{G=(T\cup W,X,E),s}$ of Balanced Covering,
where $W$ is a set that contains $m-2$ new vertices.
For $t\in T$ and $x\in X$,
we create an edge $(t,x)\in E$ if $x\in t$.
Further, we create all edges $(w,x)\in E$ for $x\in X$
and $w\in W$.
This defines the bipartite graph $G$. We let $s=2m-2$.

It remains to show that this construction is
correct, namely that $\angled{X,T}$ has an exact cover
iff $\angled{G,s}$ has a perfectly balanced cover.


$(\Rightarrow)$
If $\angled{X,T}$ has an exact cover $T'$, we claim that $D=T'\cup W$ is
a perfectly balanced cover for $\angled{G,s}$.
To justify this, note first that $|T'|=m$ and $|W|=m-2$, and
thus $|D|=2m-2=s$.
Further, each vertex $x\in X$ has exactly one neighbor
in $T'$ and $m-2$ neighbors in $W$, so $x$ has
$m-1 = s/2$ neighbors in $D$, as required.


$(\Leftarrow)$
Suppose now that $\angled{G,s}$ has a perfectly balanced
cover $D\subseteq T\cup W$. Denote $W' = D\cap W$, $k = |W'|$, and
$T' = D\cap T$. We claim that $T'$ is an exact cover of $X$.

We first show that $D$ must contain all vertices in $W$.
We count the edges between $D$ and $X$.
There are $3km$ edges between $W'$ and $X$, since
each vertex in $W'$ is connected to all $3m$ vertices
in $X$. There are $3(s-k)$ edges between
$T'$ and $X$, since each vertex
in $T$ has degree $3$.
On the other hand, there must be $3m(m-1)$ edges
between $X$ and $D$, since each vertex in $X$
must be connected to exactly $s/2 = m-1$
vertices in $D$. Together, this yields
$3(2m-2-k)+3km \;=\; 3m(m-1)$. Solving this equation, we get
$k=m-2$, which means that $W' = W$.

Since $W'=W$, $T'$ must contain exactly $s-k = m$ vertices.
Each vertex $x\in X$ is adjacent to all vertices
in $W$, so it has exactly $s/2-(m-2)=1$ neighbor
in $T'$. This means that $T'$ is an exact cover of $X$, as claimed.
\end{proof}


Next we prove that it is {\NP}-complete to decide whether
there is a size-$s$ cover, where each probe in $P$ is covered
by at least one but not all clones from the cover.

\begin{theorem}\label{thm: size-s np-hard}
The following decision problem is {\NP}-complete:
``Given a bipartite graph $G = (C,P,E)$ and an integer $s$,
is there a size-$s$ cover in $G$?''
\end{theorem}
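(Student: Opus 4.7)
The plan is to prove NP-hardness by reduction from the (minimum) Set Cover problem, since the lower-bound condition $\deg_D(p)\ge 1$ is exactly a covering condition, and a reduction from Set Cover will later be convenient for the $(1-\epsilon)\ln n$ hardness-of-approximation bounds promised in Section~\ref{sec: hardness app}. Membership in NP is immediate: guess $D$, compute each $\deg_D(p)$, and verify. For NP-hardness, recall that Set Cover (given a universe $X$, a family $S$ of subsets, and an integer $k$, decide whether some sub-family $T\subseteq S$ with $|T|\le k$ covers $X$) is NP-complete.

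Given a Set Cover instance $\angled{X,S,k}$, I would construct a bipartite graph $G=(C,P,E)$ and an integer $s$ as follows. Let $C = S \cup \{d\}$, where $d$ is a fresh "dummy" clone with no incident edges, and let $P = X \cup \{p_0\}$, where $p_0$ is a fresh "dummy" probe made adjacent to every element of $S$ but not to $d$. The edges between $S$ and $X$ are the natural element-in-set edges, exactly as in the standard Set Cover-to-Bipartite-Graph translation. Finally, set $s = k+1$. I may assume without loss of generality that $k\le |S|-1$ and that every element of $X$ lies in some set and that $S$ contains at least two sets (otherwise the Set Cover instance is trivial).

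For correctness, I need the equivalence "$\angled{X,S,k}$ has a cover of size $\le k$ iff $\angled{G,s}$ has a size-$s$ cover." For $(\Rightarrow)$: given a set cover $T\subseteq S$ of size $\le k$, pad $T$ with arbitrary sets until $|T|=k$ (possible since $|S|\ge k$), and let $D = T\cup\{d\}$. Then $|D|=k+1=s$; each $x\in X$ has at least one neighbor in $T$ (covering) and at most $k = s-1$ (since $d$ is non-adjacent); and $p_0$ has exactly $k$ neighbors in $D$, which lies in $[1,s-1]$. For $(\Leftarrow)$: suppose $D$ is a size-$s$ cover in $G$. Observe that $p_0$ is adjacent to every element of $C\setminus\{d\}$, so if $d\notin D$ then $\deg_D(p_0) = s$, violating $\deg_D(p_0)\le s-1$. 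Hence $d\in D$, which forces $D\cap S$ to have size $k$, and the condition $\deg_D(x)\ge 1$ for every $x\in X$ says precisely that $D\cap S$ is a set cover of $X$ of size $k$.

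The main technical wrinkle, and the reason for introducing the dummy vertices $d$ and $p_0$, is the two-sided nature of the size-$s$ cover condition: the lower bound is a pure covering requirement, but the upper bound must also be enforced. The clone $d$ provides a "slack" vertex in $D$ that is absent from every probe's neighborhood, which simultaneously (i)~allows each $x\in X$ to have $\deg_D(x) \le s-1$ without affecting the Set Cover instance, and (ii)~via the probe $p_0$, forces $d$ itself to be selected, thereby pinning down $|D\cap S| = k$ and making the correspondence with Set Cover exact. Everything else is routine polynomial-time construction and verification.
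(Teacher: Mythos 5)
Your proof is correct and is essentially the paper's own reduction: your dummy clone $d$ and dummy probe $p_0$ play exactly the roles of the paper's new vertices $q_0$ and $x_0$, with $s=k+1$ matching the paper's $s=b+1$. If anything, your backward direction is slightly more careful than the paper's, since you explicitly use the constraint $\deg_D(p_0)\le s-1$ to force $d\in D$ and hence $|D\cap S|=k$, a step the paper leaves implicit.
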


\begin{proof}
The proof is by a polynomial-time
reduction from the {\NP}-complete problem
Set Cover (see \cite{Garey79}). Given an instance of Set Cover
$\angled{Q,X,b}$, where $Q$ is a collection of subsets
over universe $X$, the query is whether
there is a set cover of size $b$ for $X$, that is a sub-collection
$Q'\subseteq Q$ with $|Q'|=b$ such that $\bigcup Q' = X$.

The reduction is defined as follows.
Given an instance $\angled{Q,X,b}$ of Set Cover, we
construct an instance $\angled{G=(Q\cup \{q_0\},X\cup \{x_0\},E),s}$,
where $q_0$ and $x_0$ are two new vertices. For
$q\in Q$ and $x\in X$, we create an edge $(q,x)\in E$ if $x\in q$.
We also create all edges $(q,x_0)\in E$ for $q\in Q$.
This defines the bipartite graph $G$. We let $s=b+1$.

We now justify the correctness of the construction by
showing that $\angled{G,s}$ has a size-$s$ cover iff
$\angled{Q,X,b}$ has a set cover with size $b$.


$(\Rightarrow)$
If $\angled{Q,X,b}$ has a set cover $Q'$ of size $b$,
it is clear that $D = Q'\cup \{q_0\}$ is a size-$s$ cover
for $\angled{G,s}$ since
each vertex in $X\cup \{x_0\}$ is adjacent to at least
one element from $Q'$, and not adjacent to $q_0\in D$.


$(\Leftarrow)$
Suppose now that $\angled{G,s}$ has a size-$s$ cover
$D$. We denote $Q'=D\cap Q$.
Every $x\in X$ must be adjacent to at least
one vertex in $Q'$ since there is no $x$
adjacent to $q_0$.
Thus $Q'$ is a set cover of $X$ of size $b$.
\end{proof}


\section{Hardness of Approximation}
\label{sec: hardness app}

\myparagraph{Approximation of {\BCPDavg} and {\BCPCavg}.}
Now we prove that approximating {\BCPDavg} and {\BCPCavg}
is hard.  Theorem~\ref{thm: np-hard} immediately implies
that {\BCPDavg} (as well as {\BCPDmax}) cannot be
efficiently approximated with any finite ratio. We show
that even if we allow an additive term in the approximation
bound and randomization, achieving finite ratio for
{\BCPDavg} is still {\NP}-hard. For {\BCPCavg} we show that
it is hard to be approximated with the bound $\Cavg{\ast} -
\beta \avgn (nm)^{1-\epsilon}$, where $\beta > 0$, $0<
\epsilon <1$ and $m = |C|$, $n = |P|$.

Let $\angled{G,s}$ be an instance of Balanced Covering.
Given an algorithm $\calA$ for {\BCPDavg},
recall that by $\Davg{\calA}(G,s)$ we denote
the value of the objective function
computed by $\calA$,
that is
\begin{equation*}
\Davg{\calA}(G,s) \;=\; \avgn \sum_{p\in P}|\deg_D(p)-s/2|,
\end{equation*}
where $D\subseteq C$ is the set computed by $\calA$.
Similarly, given an algorithm $\calA$ for {\BCPCavg},
$\Cavg{\calA}(G,s)$ is
the value of the objective function
computed by $\calA$ for {\BCPCavg}, that is
\begin{equation*}
\Cavg{\calA}(G,s) \;=\; \avgn \sum_{p\in P}\min\braced{\deg_D(p),s-\deg_D(p)}.
\end{equation*}
Recall that by $\Davg{\ast}(G,s)$ and $\Cavg{\ast}(G,s)$
we denote the optimal value for {\BCPDavg} and {\BCPCavg}, respectively.

Recall that the class {\RP} (randomized polynomial time)
is the complexity class of decision problems $\calP$ which
have polynomial-time probabilistic Turing machines
$M$ such that, for each input $I$,
(i)
if $I\in \calP$ then $M$ accepts $I$ with
probability at least $\half$, and
(ii)
if $I\notin \calP$ then $M$
rejects $I$ with probability $1$.
It is still open whether ${\RP} = {\NP}$.


\begin{theorem}\label{thm: hardness randomized}
Let $\alpha, \beta >0$ and $0 < \epsilon < 1$ be any
constants. If ${\RP}\neq {\NP}$ then there is no randomized
polynomial-time algorithm $\calA$ that

(a) for any instance $\angled{G,s}$ of {\BCPDavg} satisfies
\begin{equation}
\Exp[\Davg{\calA}(G,s)] \;\leq\;
         \alpha \cdot\Davg{\ast}(G,s)
                 + \beta \avgn (nm)^{1-\epsilon}, \quad \mbox{or}
                \label{eqn: bcpdavg rand approximation}
\end{equation}

(b) for any instance $\angled{G,s}$ of {\BCPCavg} satisfies
\begin{equation}
\Exp[\Cavg{\calA}(G,s)] \;\geq\;
         \Cavg{\ast}(G,s)
                 - \beta \avgn (nm)^{1-\epsilon}.
                \label{eqn: bcpcavg rand approximation}
\end{equation}
\end{theorem}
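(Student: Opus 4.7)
The plan is to prove part~(a) by amplifying the {\NP}-hardness of exact decision (Theorem~\ref{thm: np-hard}) through a padding trick, and to derive part~(b) from part~(a) via the identity $\Cavg{}(D) = s/2 - \Davg{}(D)$. From a hypothetical randomized polynomial-time $\calA$ satisfying \eqref{eqn: bcpdavg rand approximation}, I would construct a polynomial-time {\RP}-decision procedure for {\XthreeC}, contradicting $\RP \neq \NP$.

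Given an {\XthreeC} instance $\angled{X,T}$, I would first apply the reduction of Theorem~\ref{thm: np-hard} to obtain a Balanced Covering instance $\angled{G=(C,P,E),s}$ with $m = |C|$, $n = |P|$, and $s$ even. Because $s$ is even, each summand $|\deg_D(p)-s/2|$ is a nonnegative integer, so $\Davg{\ast}(G,s) = 0$ on YES-instances and $\Davg{\ast}(G,s) \geq 1/n$ on NO-instances. I would then amplify by \emph{probe replication}: form $\angled{G^+,s}$ by replacing each probe $p$ with $K$ copies sharing $p$'s neighborhood, where $K = \ceiling{(2\beta+1)^{1/\epsilon}\,(nm)^{(1-\epsilon)/\epsilon}}$ is polynomial in the input size. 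The new instance has $Kn$ probes and $m$ clones; since every copy of $p$ has the same degree as $p$ with respect to any $D$, the normalized objective is unchanged, so $\Davg{\ast}(G^+,s) = \Davg{\ast}(G,s)$ and the lower bound $\ge 1/n$ on NO-instances still holds.

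Next, I would run $\calA$ on $\angled{G^+,s}$. On a YES-instance, \eqref{eqn: bcpdavg rand approximation} gives $\Exp[\Davg{\calA}(G^+,s)] \le \beta\,m^{1-\epsilon}/(Kn)^\epsilon$, and Markov's inequality yields $\Prob\brackd{\Davg{\calA}(G^+,s) \le 2\beta m^{1-\epsilon}/(Kn)^\epsilon} \ge \half$; the choice of $K$ ensures $2\beta m^{1-\epsilon}/(Kn)^\epsilon < 1/n$. On a NO-instance, $\Davg{\calA}(G^+,s) \geq \Davg{\ast}(G^+,s) \ge 1/n$ deterministically. Hence the rule ``accept iff $\Davg{\calA}(G^+,s) < 1/n$'' is a polynomial-time {\RP} procedure for {\XthreeC}, establishing part~(a). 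For part~(b), any $\calA$ satisfying \eqref{eqn: bcpcavg rand approximation} automatically satisfies \eqref{eqn: bcpdavg rand approximation} with $\alpha = 1$, via $\Cavg{}(D) = s/2 - \Davg{}(D)$, so the same argument transfers.

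The only non-routine step is the amplification. Natural alternatives---duplicating clones or taking disjoint copies of the reduction's graph---either enlarge the effective feasible region and collapse the YES/NO gap, or introduce probes that cannot be balanced against the new shifted target. Probe replication is the ``safe'' amplification because the objective is a \emph{mean}: replicating probes leaves the normalized objective value invariant while inflating $n$ polynomially, which is precisely what is needed to drive the additive slack $\beta\avgn(nm)^{1-\epsilon}$ below the smallest positive value $1/n$ that $\Davg{\ast}$ can attain on the reduction's instances.
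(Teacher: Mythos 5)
Your proof is correct and follows essentially the same route as the paper: the same probe-replication amplification (the paper uses $r$ copies of each probe with $r = \ceiling{(2\beta m^{1-\epsilon} n^{1-\epsilon})^{1/\epsilon}}+1$, matching your $K$ up to constants), the same Markov-inequality argument on YES-instances versus the deterministic $\ge 1/n$ bound on NO-instances, and the same identity $\Cavg{}(D) = s/2 - \Davg{}(D)$ to transfer part (a) to part (b). The only cosmetic difference is that you phrase the contradiction as an {\RP} procedure for {\XthreeC} via the Theorem~\ref{thm: np-hard} reduction, whereas the paper phrases it as a randomized polynomial-time decider for the perfectly-balanced-cover problem itself.
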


\begin{proof}
We first prove part (a) of the theorem.
Suppose, towards contradiction, that
for some $\alpha$, $\beta$ and $\epsilon$ there exists
a randomized polynomial-time
algorithm $\calA$ that satisfies (\ref{eqn: bcpdavg rand approximation}).
We show that this would imply the existence of a randomized
polynomial-time algorithm that decides if there is a perfectly
balanced covering, contradicting Theorem~\ref{thm: np-hard}.

Given an instance $\angled{G,s}$ of {\BCPDavg}, where
$G=(C,P,E)$, convert it into another instance
$\angled{G^r,s}$ of {\BCPDavg}, where $G^r = (C,P',E')$ is
obtained by creating $r$ copies of each probe $p \in P$
(that is, with the same neighbors in $C$). Thus $|P'| =
rn$. We choose
$r = \ceiling{(2\beta m^{1-\epsilon} n^{1-\epsilon} )^{\frac{1}{\epsilon}}}+1$.
For this $r$, we have $2\beta m^{1-\epsilon} (nr)^{-\epsilon} < \avgn$.
Therefore the new instance $\angled{G^r,s}$ has the
following properties:

\begin{itemize}
\item
If $\angled{G,s}$ has a perfectly balanced cover (that is,
$\Davg{\ast}(G,s) = 0$) then $\Davg{\ast}(G^r,s) = 0$, and
therefore $2 \cdot \Exp[\Davg{\calA}(G^r,s)] \le 2\beta
m^{1-\epsilon} (nr)^{-\epsilon} < \avgn$. Using Markov's
inequality, this implies that $\Prob[\Davg{\calA}(G^r,s) <
\avgn] \ge \half$.
\item
if $\angled{G,s}$ does not have a perfectly balanced cover
(that is, $\Davg{\ast}(G,s) \ge \avgn$) then
$\Davg{\calA}(G^r,s)
         \ge \Davg{\ast}(G^r,s)
         = \Davg{\ast}(G,s)
         \ge \avgn$,
with probability $1$.
\end{itemize}
Since $G^r$ can be computed from $G$ in polynomial time,
from $\calA$ we could obtain a randomized
polynomial-time algorithm that determines
the existence of a perfectly balanced cover --
a problem that is {\NP}-complete, according to
Theorem~\ref{thm: np-hard}. The part (a) of the theorem follows.

Part (b) follows directly from part (a) of the theorem
and the fact that $\Cavg{\ast}(G,s) = s/2 - \Davg{\ast}(G,s)$ and
$\Cavg{}(H) = s/2 - \Davg{}(H)$ for any solution $H$
for instance $\angled{G,s}$ of {\BCPCavg} and {\BCPDavg}.
\end{proof}

Using an argument very similar to the proof of Theorem~\ref{thm: hardness randomized},
one can show that, unless ${\PP} = {\NP}$,
there is no deterministic polynomial-time algorithm that satisfies bounds
analogous to those in Theorem~\ref{thm: hardness randomized}.


\myparagraph{Approximation of {\BCPCmin} and {\BCPDmax}.}
Next we show that {\BCPCmin} cannot be approximated
efficiently with the objective value at least $\epsilon \Cmin{\ast}(G,s)$
or $\Cmin{\ast}(G,s)- O(\ln n)$,
unless {\NP} has slightly superpolynomial time algorithms.
As a result, {\BCPDmax} cannot be approximated
efficiently with the objective value at most
$\Dmax{\ast}(G,s) + O(\ln n)$.
Recall that for a given instance $\angled{G,s}$, we
denote by $\Cmin{\ast}(G,s)$ and $\Dmax{\ast}(G,s)$
the optimal value of $\Cmin{}(G,s)$
and $\Dmax{}(G,s)$, respectively.
Similarly, $\Cmin{\calA}(G,s)$ and $\Dmax{\calA}(G,s)$
are the values of the objective function computed by
an algorithm $\calA$ for {\BCPCmin} or {\BCPDmax},
respectively, on an instance $\angled{G,s}$.

\begin{theorem}\label{thm: hardness BCPCmin}
Unless $\NP \subseteq \DTIME(n^{O(\log \log n)})$, then

(a)
there is no polynomial-time algorithm $\calA$ for {\BCPCmin}
that, for some $0< \epsilon <1$, for any instance $\angled{G,s}$,
satisfies
\begin{equation}
\Cmin{\calA}(G,s)
\,\ge\,
\epsilon \, \Cmin{\ast}(G,s), \quad \mbox{and}
    \label{eqn: bcpcmin approximation 1}
\end{equation}

(b)
there is no polynomial-time algorithm $\calA$ for {\BCPCmin}
that, for some $0< \epsilon <1$, for any instance $\angled{G,s}$,
satisfies
\begin{equation}
\Cmin{\calA}(G,s)
\,\ge\,
\Cmin{\ast}(G,s) - (1-\epsilon) \ln n.
    \label{eqn: bcpcmin approximation 2}
\end{equation}

\end{theorem}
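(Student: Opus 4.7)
The plan is to treat the two parts separately. Part (a) follows easily from the {\NP}-completeness in Theorem~\ref{thm: size-s np-hard}, whereas part (b) requires a genuine gap-preserving reduction from Set Cover via Feige's theorem.

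For part (a), suppose a polynomial-time algorithm $\calA$ satisfies $\Cmin{\calA}(G,s) \ge \epsilon \cdot \Cmin{\ast}(G,s)$ for some constant $\epsilon \in (0,1)$ on every instance. Since $\Cmin{}(\cdot)$ takes nonnegative integer values, the bound $\Cmin{\calA}(G,s) \ge \epsilon > 0$ forces $\Cmin{\calA}(G,s) \ge 1$ whenever $\Cmin{\ast}(G,s) \ge 1$; and $\Cmin{\calA}(G,s) \le \Cmin{\ast}(G,s) = 0$ whenever there is no size-$s$ cover. Hence $\calA$ decides in polynomial time whether $\angled{G,s}$ admits a size-$s$ cover, a problem that is {\NP}-complete by Theorem~\ref{thm: size-s np-hard}. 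This contradicts ${\PP} \neq {\NP}$, which is implied by the (stronger) hypothesis ${\NP} \not\subseteq \DTIME(n^{O(\log\log n)})$.

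For part (b), I would start from Feige's hardness result: unless ${\NP} \subseteq \DTIME(n^{O(\log \log n)})$, for any constant $\epsilon > 0$ no polynomial-time algorithm can distinguish, for a Set Cover instance $(Q,X)$ with $N = |X|$, between the case where a cover of size $b$ exists and the case where every cover has size $\ge (1-\epsilon) b \ln N$. I would then extend the reduction of Theorem~\ref{thm: size-s np-hard} with two gadgets: a set $Z$ of $\Theta(\ln N)$ \emph{padding clones} with no edges to any probe in $X$, and a set $X_0$ of \emph{dummy probes} adjacent to every clone in $Q$ but not to $Z$. The sizes $|Z|$, $|X_0|$, and the target $s$ would be calibrated so that any optimal $D$ must contain all of $Z$ together with a minimum-size set cover from $Q$; the role of $X_0$ is that $\deg_D(x_0) = |D \cap Q|$, so the upper half of the balance condition at $X_0$ directly penalizes covers larger than $b$. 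In the YES case this gives $\Cmin{\ast}(G,s) \ge M = \Theta(\ln n)$, while in the NO case the inflated cover bound $|D \cap Q| \ge (1-\epsilon) b \ln N$ forces $\Cmin{\ast}(G,s) \le M - \Delta$ with $\Delta = \Theta(b \ln N) = \Omega(\ln n)$. Any polynomial-time $\calA$ satisfying $(\ref{eqn: bcpcmin approximation 2})$ would therefore distinguish the YES and NO Set Cover cases, contradicting Feige's theorem.

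The main obstacle will be the precise calibration of the gadget sizes so that Feige's gap transfers cleanly into an additive $(1-\epsilon)\ln n$ gap in $\Cmin{\ast}$, without being absorbed by the two-sided balance slack of BCPCmin. In particular, the upper-bound constraint $\deg_D(p) \le s - t$, which has no counterpart in plain Set Cover, must be tight at the $X_0$-gadget in every optimal $D$; otherwise an algorithm could play off degrees at $X$-probes against those at $X_0$-probes to weaken the effective gap. A secondary technical point is keeping the probe count $n$ of the reduced instance only polynomially larger than $N$, so that $\ln n = \Theta(\ln N)$ and the gap survives the reduction.
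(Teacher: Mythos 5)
Your part (a) is correct, and it takes a genuinely different route from the paper: you derive it from the {\NP}-completeness of the size-$s$ cover problem (Theorem~\ref{thm: size-s np-hard}) plus integrality of the objective, since a multiplicative guarantee must distinguish $\Cmin{\ast}=0$ from $\Cmin{\ast}\ge 1$. This needs only ${\PP}\neq{\NP}$, which is weaker than (and implied by the failure of) the stated hypothesis, so your argument for (a) is both simpler and stronger. The paper instead proves (a) with the same machinery as (b): given a Set Cover instance with optimum $b$, it makes $k=\floor{\ln n/2}$ copies of each set, adds $k$ isolated clones $W$ and one probe $x_0$ adjacent to all set-copies, sets $s=kb+k$, and shows that any algorithm with ratio $\epsilon$ yields a $((1-\Omega(\epsilon))\ln n)$-approximation for Set Cover, contradicting Feige. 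Your version buys a weaker assumption; the paper's buys uniformity with part (b).

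Part (b), however, has a genuine gap. With only one copy of each set in $Q$, your YES-case claim $\Cmin{\ast}(G,s)\ge M=\Theta(\ln n)$ fails: a probe $x\in X$ that lies in a single set of $Q$ has $\deg_D(x)\le 1$ for every $D$, so $\Cmin{\ast}(G,s)\le 1$ no matter how the padding clones $Z$, dummy probes $X_0$, and $s$ are calibrated; more generally, forcing every $X$-probe to have degree $\Theta(\ln n)$ means $D\cap Q$ must be a $\Theta(\ln n)$-fold cover, whose size is not controlled by $b$. The missing idea is exactly the paper's replication step: take $k=\ceiling{(1-\epsilon)\ln n+1}$ copies of each set, so that a size-$b$ cover plus the $k$ extra clones gives a feasible $D$ of size $s=kb+k$ with value at least $k$. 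Moreover, the calibration you defer (``any optimal $D$ must contain all of $Z$ together with a minimum-size cover'') is not a technicality to be filled in later -- it is the heart of the proof, and it is also more than is needed. The paper never analyzes a NO case or the structure of optimal solutions at all: it reads a set cover directly off the output $H$ of the hypothetical algorithm (a value of at least $k-(1-\epsilon)\ln n\ge 1$ forces every element to be covered, and $|H\cap T|\le s=kb+k\le(1-\Omega(\epsilon))\,b\ln n$ once one assumes without loss of generality that $b\ge 2/\epsilon$), and then invokes Feige's hardness of $(1-\epsilon)\ln n$-approximation for Set Cover, rather than a two-sided gap/promise argument. As written, your sketch for (b) would not go through without importing that replication gadget and, in effect, the paper's accounting.
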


\begin{proof}
We first prove part (a) of the theorem. Suppose, towards contradiction,
that there exists a polynomial-time algorithm $\calA$ that
satisfies (\ref{eqn: bcpcmin approximation 1}). We show that this
would imply the existence of a polynomial-time $((1 - \Omega
(\epsilon) )\ln n)$-approximation algorithm $\calB_1$ for the Set
Cover problem, which would imply in turn that problems in {\NP} have
$n^{O(\log\log n)}$-time deterministic algorithms \cite{Feige98Threshold}.

Algorithm $\calB_1$ works as follows. Given an instance
$\angled{Q,X}$ of Set Cover, where $|X|=n$ and $Q$ is a collection of sets
over $X$, the algorithm $\calB_1$ first reduces $\angled{Q,X}$
to an instance $\angled{G=(T\cup W, P, E),s}$ of $\BCPCmin$, where
$P=X\cup\{x_0\}$, $T$ contains $k=\floor{\frac{\ln n}{2} }$
vertices $q_1, q_2,..., q_k$ for each set $q\in Q$, and $W$ is a set
containing $k$ new vertices. For each $q\in Q$ and $i = 1,2,...,k$,
we create an edge $(q_i,x_0)\in E$ and edges $(q_i,x)\in E$ for
each $x\in q$. This defines the bipartite graph $G$.
Let $b$ represent the size of the minimum set cover of $X$.
We now assume that, without loss of generality, algorithm $\calB_1$
knows the value of $b$. Otherwise, $\calB_1$ can simply
try each $b\in\{1,2,...,n\}$,
and choose the smallest set cover. We now let $s=kb+k$.

Next $\calB_1$ calls algorithm $\calA$ on input $\angled{G,s}$
to get a balanced cover $H$ for $\angled{G,s}$, and outputs the
collection of sets $H' = \{q : (\exists i) q_i \in H\}$ as a
set cover of $X$.

To prove that $\calB_1$ is a
$((1 - \Omega (\epsilon) )\ln n)$-approximation algorithm
for Set Cover, we now show that $H'$ is a set
cover of $X$ and $|H'| \leq (1-\frac{\epsilon}{2})\ln (n) b$.

Assuming that $\angled{Q,X}$ has a set cover $Q'$ of size $b$, we
first claim that $\Cmin{\ast}(G,s) \ge k$. To justify this, from
$Q'$, we build the balanced cover $D=\{q_i: q\in Q'\}\cup W$.
Obviously, $|D| = kb+k = s$. For each $x\in X$, the $k$ copies of
$Q'$ ensure that $\deg_D(x)\ge k$, while the $k$ vertices in $W$
ensure that $\deg_D(x) \le s-k$. Our claim implies that algorithm
$\calA$ on input $\angled{G,s}$ will find a balanced cover $H$ with
objective function value at least $\epsilon\,\Cmin{\ast}(G,s) \ge
\epsilon k$. We have $|H\cap W|\ge \epsilon k$, because $x_0\in P$
is adjacent to every vertex in $T$ and $H$ has at least $\epsilon k$
vertices not adjacent to $x_0$. Therefore $|H\cap T|\le s - \epsilon
k = kb + (1-\epsilon)k$. Thus, since each $x\in P$ is adjacent to at
least one vertex in $H$ (in fact, at least $\epsilon k$), $H'$ forms
a set cover of $X$ of size at most $
  kb + (1-\epsilon)k
  \,\le\,
  (2-\epsilon)kb
  \,\le\,
  (1 - \frac{\epsilon}{2})\ln (n) b,
$
as claimed.

The algorithm $\calB_1$ clearly runs in polynomial time, and
is a $((1 - \Omega (\epsilon) )\ln n)$-approximation algorithm for the
Set Cover problem. Thus the part (a) of the theorem follows.

Next we prove the part (b) of the theorem. Suppose, towards
contradiction, that there exists a polynomial-time algorithm $\calA$ that
satisfies (\ref{eqn: bcpcmin approximation 2}).
As in part (a), we will prove that this would imply the
existence of a polynomial-time
$((1-\Omega(\epsilon))\ln n)$-approximation algorithm
$\calB_2$ for the Set Cover problem.

$\calB_2$ works like algorithm $\calB_1$ described previously
except we let $k=\ceiling{(1-\epsilon)\ln n + 1}$ this time.

Assuming that $\angled{Q,X}$ has a set cover $Q'$ of size $b$,
an argument similar to the proof of part (a) shows that algorithm $\calA$
on input $\angled{G,s}$ will find a balanced cover $H$ with
objective function value at least
\[
  \Cmin{\ast}(G,s) - (1-\epsilon)\ln n
  \,\ge\,
  k - (1-\epsilon)\ln n
  \,\ge\,
  1.
\]
Also, $H'$ forms a set cover of $X$ of size at most $s$.
We now assume that, without loss of generality, $b \geq \frac{2}{\epsilon}$,
because otherwise the Set Cover problem can be solved in
polynomial time $O(|X|^2 \cdot |Q|^{\frac{2}{\epsilon}})$. Thus we get $
  |H'|
  \,\le\,
  kb + k
  \,\le\,
  kb + \frac{\epsilon}{2} kb
  \,\le\,
  (1+\frac{\epsilon}{2})(1-\epsilon)(\ln (n) + \frac{2}{1-\epsilon} )b
  \,\le\,
  (1-\Omega(\epsilon)) \ln(n) b,
$
as claimed.

The algorithm $\calB_2$ clearly runs in polynomial time,
and is a $((1-\Omega(\epsilon))\ln n)$-approximation algorithm for the Set
Cover problem. Thus the theorem follows.
\end{proof}

As a corollary, we also get an approximation hardness result for {\BCPDmax}.

\begin{corollary}\label{cor: hardness BCPDmax}
Unless $\NP \subseteq \DTIME(n^{O(\log \log n)})$,
there is no polynomial-time algorithm $\calA$
for {\BCPDmax} that,
for some $0< \epsilon <1$ and
for any instance $\angled{G,s}$, satisfies
\begin{equation}
\Dmax{\calA}(G,s)
\,\le\,
\Dmax{\ast}(G,s) + (1-\epsilon) \ln n.
    \label{eqn: bcpdmax approximation}
\end{equation}
\end{corollary}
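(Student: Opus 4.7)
The plan is to derive the corollary directly from Theorem~\ref{thm: hardness BCPCmin}(b) by exploiting the pointwise identity $\Dmax{}(D) = s/2 - \Cmin{}(D)$ noted earlier in the paper, which in particular yields $\Dmax{\ast}(G,s) = s/2 - \Cmin{\ast}(G,s)$ for every instance $\angled{G,s}$.

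First I would suppose, toward a contradiction, that there is a polynomial-time algorithm $\calA$ for {\BCPDmax} satisfying~(\ref{eqn: bcpdmax approximation}) for some $0 < \epsilon < 1$. Given any instance $\angled{G,s}$, let $D$ be the cover returned by $\calA$ on that instance. Then $\Dmax{\calA}(G,s) = s/2 - \Cmin{}(D)$, and combining this with the inequality in~(\ref{eqn: bcpdmax approximation}) and the identity $\Dmax{\ast}(G,s) = s/2 - \Cmin{\ast}(G,s)$ gives
\begin{equation*}
s/2 - \Cmin{}(D) \;\le\; s/2 - \Cmin{\ast}(G,s) + (1-\epsilon)\ln n,
\end{equation*}
which after cancellation becomes $\Cmin{}(D) \ge \Cmin{\ast}(G,s) - (1-\epsilon)\ln n$.

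Thus the very same algorithm $\calA$, viewed as outputting the set $D$ and reporting $\Cmin{}(D)$, serves as a polynomial-time algorithm for {\BCPCmin} satisfying~(\ref{eqn: bcpcmin approximation 2}) with the same constant $\epsilon$. By Theorem~\ref{thm: hardness BCPCmin}(b) this implies $\NP \subseteq \DTIME(n^{O(\log\log n)})$, contradicting our assumption and completing the proof.

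There is essentially no technical obstacle here; the entire argument is a one-line reduction via the affine relationship between $\Dmax{}$ and $\Cmin{}$. The only thing worth being careful about is making sure the additive slack $(1-\epsilon)\ln n$ is preserved verbatim under the transformation, which it is because $s/2$ cancels on both sides and the inequality flips cleanly from a ``$\le$'' upper bound on $\Dmax{}$ to a ``$\ge$'' lower bound on $\Cmin{}$.
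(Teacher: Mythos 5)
Your argument is correct and matches the paper's own proof, which likewise derives the corollary from part (b) of Theorem~\ref{thm: hardness BCPCmin} using the identities $\Dmax{}(H) = s/2 - \Cmin{}(H)$ and $\Dmax{\ast}(G,s) = s/2 - \Cmin{\ast}(G,s)$. You simply spell out the cancellation of $s/2$ explicitly, which the paper leaves implicit.
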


\begin{proof}
The corollary follows directly from part (b) of Theorem~\ref{thm: hardness BCPCmin}
and the fact that $\Dmax{\ast}(G,s) = s/2 - \Cmin{\ast}(G,s)$ and
$\Dmax{}(H) = s/2 - \Cmin{}(H)$ for any solution $H$
for instance $\angled{G,s}$ of {\BCPDmax} and {\BCPCmin}.
\end{proof}


\section{Approximation Algorithms and Analysis}
\label{sec: algorithms}

In this section we present several randomized algorithms for
different versions of Balanced Covering.
We give two algorithms {\RCM} and {\RCMtwo} for {\BCPCmin}, algorithm {\RDM}
for {\BCPDmax}, and two algorithms {\RCA} and {\RCAtwo} for {\BCPCavg}.

All algorithms are based on randomized rounding.
We first solve a linear relaxation LP of the integer program ILP for
Balanced Covering, and then use the fractional solution as probabilities to
randomly choose the integral solutions.

Let $\starx_1,...,\starx_n$, where $0\le \starx_i\le 1$
for each $i$, be the optimum solution of LP and $\starz$ the corresponding
optimum value of the objective function.
We choose $X_i =1$ with probability $\starx_i$ and $0$ otherwise, which
gives us a ``provisional'' integral
solution $X_1,...,X_n$ with objective value $Z$. Since the
expectation of $Z$ is equal to $\starz$ and the random variables $X_i$ are
independent, we can apply the Chernoff bound to show that with
high probability the value of $Z$ is close to $\starz$ (and thus
also approximates well the optimum of ILP).
If $Z$ is not feasible, we adjust the values of a sufficient number $L$ of
the variables $X_i$ obtaining a final
feasible solution whose value $\tildeZ$ differs
from $Z$ by at most $L$. Applying the Chernoff bound again, we get an
estimate on $L$, and combining it with the bound on $Z$
we obtain a bound on $\tildeZ$.

For some objective functions we refine this approach further, by
adjusting the probability of setting $X_i$ to $1$, in order to reduce the
violation $L$ of the constraints. This modification improves
asymptotic performance bounds but -- as we show later in
Section~\ref{sec: experiments} -- it tends to degrade the
experimental performance on both random and real data sets.

\subsection{Algorithm {\RCM} for {\BCPCmin}}
\label{sec: RCM algorithm} Given $G = (C,P,E)$, let
$C=\{c_1, c_2,..., c_m\}$, $P=\{p_1, p_2,..., p_n\}$.
And let $A=[a_{ij}]$ be the
Boolean $m\times n$ adjacency matrix of $G$,
that is $a_{ij}= 1$ iff $(c_i, p_j)\in E$;
otherwise $a_{ij} = 0$. Then {\BCPCmin} is
equivalent to the following integer linear program {\minIP}:
\begin{eqnarray*}
\mbox{\rm maximize:}
    \quad z &&
                        \\
\mbox{\rm subject to:}
    \quad z \;&\le\;& \textstyle{\sum_{i=1}^m a_{ij} x_i} \quad\forall j=1,...,n
                        \\
    \quad z \;&\le\;& \textstyle{\sum_{i=1}^m (1-a_{ij}) x_i} \quad\forall j=1,...,n
                        \\
    \quad \textstyle{\sum_{i=1}^m x_i} \;&\le\;& s
                        \\
        x_i \;&\in\;& \braced{0,1} \quad\forall i=1,...,m
\end{eqnarray*}
The Boolean variables $x_i$ indicate whether the
corresponding $c_i \in C$ are selected or not.

\myparagraph{Algorithm {\RCM}.}
The algorithm first relaxes the
last constraint to $0\leq x_i \leq 1$ to obtain the linear program
{\minLP}, and then computes an optimal solution $\starx_i$,
$i=1,2,...,m$, of {\minLP}. Next, applying randomized rounding,
{\RCM} computes an integral solution $X_1,...,X_m$ by choosing
$X_i = 1$ with probability $\starx_i$ and $0$ otherwise. Note that this
solution may not be feasible since $\sum_{i=1}^m X_i$ may exceed
$s$. Let $L = \max \{\sum_{i=1}^m X_i - s, 0\}$. {\RCM} changes $L$
arbitrary variables $X_i = 1$ to $0$, obtaining a feasible solution
$\tildeX_1,...,\tildeX_m$.


\myparagraph{Analysis.} \label{sec: analysis bcpcmin}
We denote by
$\Cmin{\malyRCM}(G,s)$ or $\tildeZ$ the value of the objective
function computed by {\RCM} on input $\angled{G,s}$, that is
$\Cmin{\malyRCM}(G,s) =
    \tildeZ = \min_{j=1}^n \{\sum_{i=1}^m a_{ij}\tildeX_i,
        \sum_{i=1}^m (1-a_{ij})\tildeX_i\}$.


\begin{lemma}\label{lem: app bound}
For any instance $\angled{G,s}$ of {\BCPCmin},
with probability at least $\half$,
\begin{equation}
\Cmin{\malyRCM}(G,s) \;\ge\;
    \Cmin{\ast}(G,s) -
    O\left(\sqrt{\Cmin{\ast}(G,s) \ln n}
    + \sqrts\right).
        \label{eqn: RCM app bound}
\end{equation}
\end{lemma}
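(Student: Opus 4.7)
The plan is to bound $\Cmin{\ast}(G,s) - \Cmin{\malyRCM}(G,s)$ by decomposing it into three sources of loss: the LP--IP gap (trivial in the easy direction), the stochastic deviation introduced by independent rounding, and the damage caused by the cleanup step that restores the cardinality constraint. The first is free: every integer-feasible solution is also feasible for {\minLP}, so $\starz \ge \Cmin{\ast}(G,s)$, and it therefore suffices to show that $\tildeZ \ge \starz - O(\sqrt{\starz \ln n}) - O(\sqrts)$ with probability at least $\half$ and then push the bound through to $\Cmin{\ast}$ by monotonicity.

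For the stochastic deviation, we would fix a probe $p_j$ and consider the two sums $S_j^+ = \sum_i a_{ij} X_i$ and $S_j^- = \sum_i (1-a_{ij}) X_i$; each is a sum of independent Bernoullis with means $\mu_j^+ = \sum_i a_{ij} \starx_i$ and $\mu_j^- = \sum_i (1-a_{ij}) \starx_i$, both of which are at least $\starz$ by the LP constraints. We would apply the multiplicative Chernoff lower tail to each of the $2n$ sums with a deviation of order $\sqrt{\mu_j^\pm \ln n}$ and then union-bound so that, with probability at least $\frac{3}{4}$, every $S_j^\pm$ is at least $\mu_j^\pm - O(\sqrt{\mu_j^\pm \ln n})$. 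For the cleanup cost $L = \max\{0,\sum_i X_i - s\}$, since $E[\sum_i X_i] = \sum_i \starx_i \le s$ and $\mathrm{Var}(\sum_i X_i) \le s/4$, Chebyshev (or a Chernoff upper tail) gives $L = O(\sqrts)$ with probability at least $\frac{3}{4}$. Intersecting these two events gives joint probability at least $\half$; on that event, switching $L$ of the $X_i$'s from $1$ to $0$ decreases each $S_j^\pm$ by at most $L$, so $\tildeZ \ge \min_{j,\pm} S_j^\pm - L \ge \starz - O(\sqrt{\starz \ln n}) - O(\sqrts)$, where the last step uses monotonicity of $f(\mu) = \mu - c\sqrt{\mu \ln n}$ on $\mu \ge c^2 \ln n / 4$ together with $\mu_j^\pm \ge \starz$.

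The main obstacle we anticipate is this monotonicity step, which has to be applied twice: once to turn probe-dependent deviations (potentially of order $\sqrt{s \ln n}$, since $\mu_j^\pm$ can be as large as $s$) into a uniform bound controlled by $\starz$, and once more to pass from $\starz - O(\sqrt{\starz \ln n})$ down to $\Cmin{\ast} - O(\sqrt{\Cmin{\ast} \ln n})$ for the stated form. Both applications require the argument of $f$ to exceed $\Theta(\ln n)$; when $\Cmin{\ast} = O(\ln n)$, the claimed bound is already non-positive and is trivially satisfied by $\tildeZ \ge 0$, so that small-optimum regime is disposed of as a separate base case. Care is also needed in specifying the Chernoff constants so that the failure probabilities over the $2n$ probe events plus the single cardinality event sum to at most $\half$.
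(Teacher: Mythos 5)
Your proposal is correct and takes essentially the same route as the paper's proof: Chernoff lower tails on the $2n$ sums with a union bound, a separate tail bound giving $L=O(\sqrts)$ with constant probability, the inequality $\tildeZ \ge Z-L$, and monotonicity plus the trivial small-optimum case to pass from $\starz$ down to $\Cmin{\ast}(G,s)$; the paper merely thresholds each sum directly at $(1-\lambda)\starz$, which spares one of your two monotonicity applications. One small slip worth fixing: $\mathrm{Var}\bigl(\sum_i X_i\bigr)=\sum_i \starx_i(1-\starx_i)\le s$ rather than $s/4$, but Chebyshev with this weaker bound still gives $L\le 2\sqrts$ with probability at least $\frac{3}{4}$, so the argument is unaffected.
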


\begin{proof}
Let $\starz = \min_{j=1}^n
    \{\sum_{i=1}^m a_{ij}\starx_i,
        \sum_{i=1}^m (1-a_{ij})\starx_i\}$
be the optimum solution of {\minLP}.
Let also
$Z = \min_{j=1}^n \{\sum_{i=1}^m a_{ij}X_i,
                \sum_{i=1}^m (1-a_{ij})X_i\}$.

The $\{X_i\}$ are independent Bernoulli random variables
with $\Exp[X_i]=\starx_i$. So, for each $j$,
$\Exp[\sum_{i=1}^m a_{ij} X_i]
    = \sum_{i=1}^m a_{ij} x^*_i \ge z^*$,
By a standard Chernoff bound, we get
\[
\Prob[\textstyle\sum_{i=1}^m a_{ij} X_i \le (1-\lambda) z^*]
    \;\le\; e^{-\lambda ^2 z^* / 2},
\]
where $0< \lambda \le 1$. Similarly, for all $j$,
\[
\Prob[\textstyle\sum_{i=1}^m (1-a_{ij}) X_i \le (1-\lambda) z^*]
    \;\le\; e^{-\lambda ^2 z^* / 2}.
\]
By the naive union bound,
the probability that any of the $2n$ above events happens is at most
$2ne^{-\lambda ^2 z^*/2}$. Hence we have
\begin{equation}
\Prob[Z \le (1-\lambda) z^*]
    \;\le\; 2ne^{-\lambda ^2 z^* / 2}.
    \label{eqn: 2n bound}
\end{equation}

Likewise, $\Exp[\sum_{i=1}^m X_i] = \sum_{i=1}^m \starx_i \le s$.
Thus by the Chernoff bound,
$\Prob[\textstyle{\sum_{i=1}^m X_i} \ge (1+\epsilon) s]
        \le e^{- \epsilon ^2 s/4}$, where $0< \epsilon \le 2e-1$.
Recalling $L = \max \{\sum_{i=1}^m X_i - s, 0 \}$, we have
\begin{equation}
\Prob[L \ge \delta \sqrts]
        \;\le\; e^{- \delta ^2/4},
        \label{eqn: constraint violation}
\end{equation}
where $0< \delta \le (2e-1)\sqrts$.

Since $\tildeZ \ge Z-L$, we get
$\Prob[\tildeZ \le (1 - \lambda) \starz - \delta \sqrts]
    \le \Prob[Z \le (1 - \lambda) \starz] + \Prob[L \ge \delta \sqrts]$.
Combining this with (\ref{eqn: 2n bound}) and
(\ref{eqn: constraint violation}), we have
\begin{equation}
\Prob[\tildeZ \le (1 - \lambda) \starz - \delta \sqrts]
        \;\le\; 2ne^{-\lambda ^2 z^* / 2} + e^{- \delta ^2/4}.
    \label{eqn: 2n bound and constraint violation}
\end{equation}

Suppose $\Cmin{\ast}(G,s) \ge 2\ln (8n)$. Then $\starz \ge 2\ln (8n)$
as well, because $\Cmin{\ast}(G,s)\le \starz$. Choosing
$\lambda=\sqrt{2\ln (8n) /\starz}$ and $\delta=\sqrt{4\ln 4}$,
from (\ref{eqn: 2n bound and constraint violation}), we get
\[
\Prob[\tildeZ \le \starz - \sqrt{2\ln (8n)\starz} - \sqrt{4\ln (4) s}]
\;\le\; \half.
\]
Since $\starz \ge \Cmin{\ast}(G,s)\ge 2\ln (8n)$, with
probability at least $\half$, we have
\begin{equation}
\tildeZ \ge \Cmin{\ast}(G,s) -
    \sqrt{2\ln (8n) \Cmin{\ast}(G,s)} - \sqrt{4\ln (4) s}.
    \label{eqn: RCM final inequ}
\end{equation}

Inequality (\ref{eqn: RCM final inequ}) is also
trivially true for  $\Cmin{\ast}(G,s) < 2\ln (8n)$.
Thus the lemma follows.
\end{proof}


\subsection{An Alternative Algorithm {\RCMtwo} for {\BCPCmin}}
        \label{sec: RCMtwo algorithm}

The performance bound for {\RCM} given in Section \ref{sec: RCM
algorithm} can be improved for instances where the optimum
is small compared to $s$. We now provide an alternative algorithm
{\RCMtwo}, which is identical to {\RCM} in all steps except for
the rounding scheme: choose $X_i=1$ with probability
$(1-\epsilon)\starx_i$, and $0$ otherwise, where $\epsilon =
\min\braced{ 2\sqrt{\ln(4n+2)/z^*}, 1}$.

\myparagraph{Analysis.} All notations are defined similarly
to those in Section \ref{sec: RCM algorithm}.

\begin{lemma}
For any instance $\angled{G,s}$ of {\BCPCmin},
with probability at least $\half$,
\begin{equation}
    \Cmin{\malyRCMtwo}(G,s)
    \,\ge\,
    \Cmin{\ast}(G,s) - O\left(\sqrt{\Cmin{\ast}(G,s) \ln n}\right).
    \label{eqn: RCMtwo app bound}
\end{equation}
\end{lemma}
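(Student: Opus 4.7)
The plan is to follow the same template as in the analysis of {\RCM}, but to exploit the scaling factor $1-\epsilon$ in the rounding probabilities to eliminate the $\sqrts$ term in the error. I will dispose of the boundary case first: if $\epsilon = 1$ (that is, $z^*\le 4\ln(4n+2)$), then every $X_i$ is set to $0$ with probability $1$, so $\tildeZ = 0$; but in this regime $\Cmin{\ast}(G,s)\le z^* = O(\ln n)$, so the right-hand side of (\ref{eqn: RCMtwo app bound}) is nonpositive under a suitable hidden constant and the inequality holds trivially.

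For the main case $\epsilon<1$, the key observation is that scaling by $1-\epsilon$ shrinks the expected budget to $\Exp[\sum_i X_i] \le (1-\epsilon)s$, creating enough slack that $\sum_i X_i > s$ becomes very unlikely, while shrinking the expected objective by only $\epsilon z^*$. I will invoke a standard multiplicative Chernoff bound in three places: (i) to each of the $n$ probe lower bounds $\sum_i a_{ij}X_i \ge (1-\epsilon)z^*$, (ii) to each of the $n$ complement lower bounds $\sum_i(1-a_{ij})X_i \ge (1-\epsilon)z^*$, and (iii) to the budget bound $\sum_i X_i \le s$. The choice $\epsilon = 2\sqrt{\ln(4n+2)/z^*}$ is made precisely so that for each of the $2n$ events of type (i)/(ii), the deviation $\lambda = \sqrt{2\ln(4n+2)/((1-\epsilon)z^*)}$ yields failure probability at most $1/(4n+2)$; and since the LP constraints $z\le \sum_i a_{ij}x_i$ and $z\le \sum_i (1-a_{ij})x_i$ together imply $z^*\le s/2$, we get $\epsilon^2 s \ge 8\ln(4n+2)$, which makes the Chernoff tail for (iii) bounded by $1/(4n+2)$ as well. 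A union bound over these $2n+1$ events gives total failure probability at most $(2n+1)/(4n+2) = \half$.

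Conditional on no failure, $L=0$ and thus $\tildeZ = Z$; for each $j$ the Chernoff deviation gives $\sum_i a_{ij}X_i$ (and its complement) at least $(1-\lambda)(1-\epsilon)z^* \ge z^* - \epsilon z^* - \sqrt{2\ln(4n+2)z^*}$, and using $\epsilon z^* = 2\sqrt{\ln(4n+2)z^*}$ this simplifies to $z^* - O(\sqrt{z^*\ln n})$. To convert this bound in $z^*$ into the claimed bound in $\Cmin{\ast}(G,s)$ I use that $f(x) = x - c\sqrt{x\ln n}$ is nondecreasing for $x = \Omega(\ln n)$: when $\Cmin{\ast}(G,s) = \Omega(\ln n)$, the inequality $z^*\ge \Cmin{\ast}(G,s)$ gives $f(z^*)\ge f(\Cmin{\ast}(G,s))$, yielding (\ref{eqn: RCMtwo app bound}); otherwise the right-hand side is again nonpositive and the bound is trivial from $\tildeZ \ge 0$.

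The main obstacle is the delicate balancing in the choice of $\epsilon$: it must be large enough that the budget slack $\epsilon s$ guarantees $L=0$ via Chernoff, yet small enough that the loss $\epsilon z^*$ in the objective is absorbed by the same $\sqrt{z^*\ln n}$-scale deviation coming from the Chernoff bounds on the probe constraints. The specific value $2\sqrt{\ln(4n+2)/z^*}$ is the unique scaling (up to constants) that makes both aspects work under a single union bound summing to exactly $\half$, and it is precisely the fact that $z^*\le s/2$ that turns the budget tail into a lower-order term, removing the $\sqrts$ error that appeared in Lemma~\ref{lem: app bound}.
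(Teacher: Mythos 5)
Your proposal is correct and follows essentially the same route as the paper's proof: scale the rounding probabilities by $1-\epsilon$ with $\epsilon=2\sqrt{\ln(4n+2)/z^*}$, apply Chernoff bounds to the $2n$ probe constraints and the budget constraint using $z^*\le s/2$, take a union bound over the $2n+1$ events to get failure probability $\half$, and transfer the bound from $z^*$ to $\Cmin{\ast}(G,s)$ via $z^*\ge\Cmin{\ast}(G,s)$ and monotonicity, with the small-optimum regime handled trivially. The only cosmetic difference is that the paper dismisses the whole range $\epsilon\ge\half$ as trivial, whereas you keep all $\epsilon<1$ in the main argument (which then needs a Bernstein-type form of the Chernoff bound for the budget tail when $\epsilon$ is close to $1$, but this is routine).
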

\begin{proof}
 The $\{X_i\}$ are independent random variables
 with $\Exp [X_i] = (1-\epsilon)x^*_i$.
 By linearity of expectation,
 $\Exp [\sum_{i=1}^m X_i] \le \sum_{i=1}^m (1-\epsilon)x^*_i \le (1-\epsilon)s$.
 Thus, by the Chernoff bound,
 \[\textstyle
 \Prob[\sum_{i=1}^m X_i \ge s]
 \,\le\,
 \Prob[\sum_{i=1}^m X_i \ge (1+\epsilon)(1-\epsilon)s]
 \,\le\, e^{-\epsilon^2(1-\epsilon) s / 4}.
 \]
 As $z^*\le s/2$, we have $s/4\ge z^*/2$. The above bound implies
 \begin{equation}\textstyle
 \Prob[\sum_{i=1}^m X_i \ge s]
 \,\le\,
 e^{-\epsilon^2(1-\epsilon) \starz / 2}.
 \label{eqn: RCM2 s bound}
 \end{equation}
 Likewise, for each $j$,
 $\sum_{i=1}^m a_{ij} x^*_i \ge z^*$,
 so $\Exp [\sum_{i=1}^m a_{ij} X_i] \ge (1-\epsilon)z^*$.
 By the Chernoff bound,
 \begin{equation}\textstyle
 \Prob[\sum_{i=1}^m a_{ij} X_i \le (1-\epsilon)^2z^*]
 \,\le\, e^{-\epsilon^2(1-\epsilon) z^* / 2}.
 \label{eqn: RCM2 Z1 bound}
 \end{equation}
 Similarly, for  all $j$,
 \begin{equation}\textstyle
 \Prob[\sum_{i=1}^m (1-a_{ij}) X_i \le (1-\epsilon)^2 z^*]
 \,\le\, e^{-\epsilon^2(1-\epsilon) z^* / 2}.
 \label{eqn: RCM2 Z0 bound}
 \end{equation}
 Letting $L=\max \{\sum_{i=1}^m X_i -s, 0\}$,
 since $\tildeZ \ge Z-L$,
 we get $\Prob[\tildeZ \le (1 - \epsilon)^2 \starz - L]
    \le \Prob[Z \le (1 - \epsilon)^2 \starz] + \Prob[\sum_{i=1}^m X_i \ge s]$.
 Combining this with (\ref{eqn: RCM2 s bound}), (\ref{eqn: RCM2 Z1 bound})
 and (\ref{eqn: RCM2 Z0 bound}), we have
 \[
 \Prob[\tildeZ \le (1 - \epsilon)^2 \starz]
 \;\le\; (2n+1)e^{-\epsilon^2(1-\epsilon)z^*/2}.
 \]
 Since $(1-\epsilon)^2 \ge  1- 2\epsilon$,
 for $\epsilon < \half$, we get
 \begin{equation}
 \Prob[\tildeZ \le \starz - 4\sqrt{\ln(4n+2)\starz}]
 \;\le\; \half.
 \label{eqn: RCM2 tildeZ bound}
 \end{equation}
 The above bound is also trivially true for $\epsilon \ge \half$
 (that is, $\starz \le 16\ln (4n+2)$).
 Finally, suppose $\Cmin{\ast}(G,s) \ge 16\ln (4n+2)$.
 Since also $\Cmin{\ast}(G,s) \le \starz$,
 inequality (\ref{eqn: RCM2 tildeZ bound}) implies that
 with probability at least $\half$,
 \begin{equation}
 \tildeZ \ge \Cmin{\ast}(G,s) - 4\sqrt{\ln(4n+2)\Cmin{\ast}(G,s)},
    \label{eqn: RCM2 final inequ}
 \end{equation}
 Inequality (\ref{eqn: RCM2 final inequ}) is also trivially true
 for $\Cmin{\ast}(G,s) \le 16\ln (4n+2)$.
 Thus the lemma follows.
\end{proof}

We will show later in Section \ref{sec: experiments} that
{\RCMtwo} does not outperform {\RCM} in experimental analysis.
Therefore {\RCM} cannot be completely substituted by {\RCMtwo}.


\subsection{Algorithm {\RDM} for {\BCPDmax}}
\label{sec: RDM algorithm} In this section we present
our randomized algorithm {\RDM} for {\BCPDmax}. Given $G = (C,P,E)$, let $A$ be
the Boolean $m\times n$ adjacency matrix of $G$, as in
Section~\ref{sec: RCM algorithm}. Then {\BCPDmax} is equivalent to the
following integer linear program {\maxIP}:
\begin{eqnarray*}
\mbox{\rm minimize:}
    \quad z &&
                        \\
\mbox{\rm subject to:}
    \quad z \;&\ge\;& \textstyle{\sum_{i=1}^m a_{ij} x_i} - s/2 \quad\forall j=1,...,n
                        \\
    \quad z \;&\ge\;& \textstyle{s/2 - \sum_{i=1}^m a_{ij} x_i} \quad\forall j=1,...,n
                        \\
    \quad \textstyle{\sum_{i=1}^m x_i} \;&=\;& s
                        \\
        x_i \;&\in\;& \braced{0,1} \quad\forall i=1,...,m
\end{eqnarray*}
The Boolean variables $x_i$ indicate whether the corresponding
$c_i \in C$ are selected or not.

\myparagraph{Algorithm {\RDM}.} The algorithm first relaxes
the last constraint to $0\leq x_i\leq 1$ to obtain the linear program
{\maxLP}, and then computes an optimal solution $\starx_i$,
$i=1,2,...,m$, of {\maxLP}. Next, applying randomized rounding like in
{\RCM}, {\RDM} computes an integral solution $X_1,...,X_m$
by choosing $X_i = 1$ with probability $\starx_i$ and $0$ otherwise. Note
that this solution may not be feasible since $\sum_{i=1}^m X_i$
may not be exactly $s$. Let $L = \sum_{i=1}^m X_i - s$. {\RDM} changes
$|L|$ arbitrary variables $X_i = 1$ to $0$ if $L
> 0$, and does the contrary if $L < 0$, obtaining a feasible
solution $\tildeX_1,...,\tildeX_m$.


\myparagraph{Analysis.} \label{sec: analysis bcpdmax} We denote by
$\starz$ and $\Dmax{\malyRDM}(G,s)$ (or $\tildeZ$) the value of the
objective function computed by {\maxLP} and {\RDM} for {\BCPDmax},
respectively. Namely, $\starz = \max_{j=1}^n
    | \sum_{i=1}^m a_{ij}\starx_i - s/2 |$,
and
$\Dmax{\malyRDM}(G,s) =
    \tildeZ = \max_{j=1}^n | \sum_{i=1}^m a_{ij}\tildeX_i - s/2 |$.


\begin{lemma}\label{lem: RDM bound}
For any instance $\angled{G,s}$ of {\BCPDmax}, with probability at least $\half$,
\begin{equation}
\Dmax{\malyRDM}(G,s) \;\le\;
    \Dmax{\ast}(G,s) + O\left(\sqrt{s \ln n} \right).
        \label{eqn: RDM bound}
\end{equation}
\end{lemma}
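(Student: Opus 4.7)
The plan is to mirror the analysis of {\RCM} in Lemma~\ref{lem: app bound}, but applied to the two-sided deviation around $s/2$ instead of a one-sided lower bound. Since the LP relaxation {\maxLP} is a relaxation of {\maxIP}, we immediately have $\starz \le \Dmax{\ast}(G,s)$. Write $S_j = \sum_{i=1}^m a_{ij} X_i$ and $\mu_j = \sum_{i=1}^m a_{ij}\starx_i$. By LP feasibility, $|\mu_j - s/2| \le \starz$ for every $j$, so the triangle inequality gives
\[
\bigl|\textstyle\sum_i a_{ij}\tildeX_i - s/2 \bigr|
\;\le\; |\mu_j - s/2| + |S_j - \mu_j| + \bigl|\textstyle\sum_i a_{ij}\tildeX_i - S_j\bigr|
\;\le\; \starz + |S_j - \mu_j| + |L|,
\]
where $L = \sum_i X_i - s$ controls how far the rounded point is from feasibility and hence how much the repair step can perturb each $S_j$ (each flipped $X_i$ changes $S_j$ by at most $1$). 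So it suffices to show that, with probability at least $\half$, both $\max_j |S_j - \mu_j|$ and $|L|$ are $O(\sqrt{s\ln n})$.

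First I would handle the deviations $|S_j - \mu_j|$. The $\{X_i\}$ are independent with $\Exp[X_i]=\starx_i$, and $\mu_j \le \sum_i \starx_i = s$. Applying a standard two-sided multiplicative Chernoff (or Bernstein) bound and using $\mu_j \le s$, there is a constant $c$ such that $\Prob[|S_j - \mu_j| \ge c\sqrt{s\ln n}] \le 1/(4n)$ for each $j$. A union bound over the $n$ probes gives $\max_j |S_j-\mu_j| \le c\sqrt{s\ln n}$ with probability at least $3/4$. Next, for $L$, since $\Exp[\sum_i X_i] = \sum_i \starx_i = s$, the same Chernoff argument on the single sum $\sum_i X_i$ (whose mean is $s$) yields $|L| \le c'\sqrt{s\ln n}$ with probability at least $3/4$ for a suitable constant $c'$. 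Combining the two events by a union bound, both hold simultaneously with probability at least $\half$.

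On this joint good event, the inequality above gives, for every $j$,
\[
\bigl|\textstyle\sum_i a_{ij}\tildeX_i - s/2\bigr|
\;\le\; \starz + c\sqrt{s\ln n} + c'\sqrt{s\ln n}
\;\le\; \Dmax{\ast}(G,s) + O(\sqrt{s\ln n}),
\]
and taking the maximum over $j$ yields $\Dmax{\malyRDM}(G,s) \le \Dmax{\ast}(G,s) + O(\sqrt{s\ln n})$, as claimed.

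The main subtlety, and the step that differs from the {\BCPCmin} analysis, is that here the Chernoff bound must be expressed in terms of $s$ (the natural scale for the means $\mu_j$ and for $\sum_i X_i$) rather than in terms of $\starz$, because $\starz$ itself could be arbitrarily small -- even zero for perfectly balanceable instances -- while a $\sqrt{\starz \ln n}$ type deviation would not account for the inherent $\Theta(\sqrt{s})$ fluctuations of the randomized rounding on a set whose expected size is $s$. Using $\mu_j \le s$ (respectively $\Exp[\sum_i X_i] = s$) in the tail bound is what converts the Chernoff estimate into the clean additive $O(\sqrt{s\ln n})$ term, and it is also what forces the bound to involve $s$ rather than $\Dmax{\ast}$ in the square root.
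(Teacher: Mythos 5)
Your argument is correct and follows essentially the same route as the paper's proof: a Chernoff bound on each $\sum_i a_{ij}X_i$ converted to an additive deviation of order $\sqrt{s\ln n}$ via $\mu_j \le s$, a union bound over the $n$ probes, a separate Chernoff bound on $\sum_i X_i$ to control the repair term $|L|$, the triangle inequality to assemble these, and finally $\starz \le \Dmax{\ast}(G,s)$ from the LP relaxation. The only difference is presentational: the paper works with multiplicative deviations relative to $\barz_j$ and treats the parameter-range corner cases explicitly (e.g.\ $s\ge 9$, $\lambda \le \sqrt{s/2-\starz}$), whereas you invoke Bernstein/Chernoff in additive form and leave implicit the degenerate regime where $\sqrt{s\ln n}$ is comparable to $s$, which is harmless since $\Dmax{}$ never exceeds $s/2$.
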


\begin{proof}
We now assume that, without loss of
generality, $s\ge 9$, because otherwise $s$ is a constant then
(\ref{eqn: RDM bound}) will be trivially true.

Let $Z = \max_{j=1}^n | \sum_{i=1}^m a_{ij}X_i - s/2 |$.
For each $j$, define $\barz_j = \sum_{i=1}^m a_{ij}\starx_i$, and $\starz_j =
| \barz_j - s/2 |$. Similarly, define random variables
$\barZ_j = \sum_{i=1}^m a_{ij} X_i$ and $Z_j = | \barZ_j - s/2 |$. Thus $\starz
= \max_{j=1}^n \starz_j$ and $Z = \max_{j=1}^n Z_j$.

The $\{X_i\}$ are independent Bernoulli random variables
with $\Exp[X_i]=\starx_i$. So $\Exp[\barZ_j] = \barz_j$ for each $j$.
Applying a standard Chernoff bound, we get
$\Prob[|\barZ_j - \barz_j| \ge \epsilon \barz_j]
    \le 2e^{-\epsilon ^2 \barz_j /4}$,
for $0< \epsilon \le 1$. This and the triangle inequality imply
\begin{equation}
 \Prob[Z_j \ge \starz + \lambda \sqrts]
 \;\le\;
 \Prob[Z_j \ge \starz_j + \lambda \sqrt{\barz_j}]
 \;\le\;
 \Prob[|\barZ_j - \barz_j| \ge \lambda \sqrt{\barz_j}]
 \;\le\;
 2e^{-\lambda ^2 /4},
 \label{eqn: Zj z* diff}
\end{equation}
where $0< \lambda \le \sqrt{\barz_j}$. Since
$|\barz_j - s/2| \le \starz$, $\barz_j \ge s/2 - \starz$ for all $j$.
Hence (\ref{eqn: Zj z* diff}) also holds when
$0< \lambda \le \sqrt{s/2 - \starz}$.

By the naive union bound,
\begin{equation}
\Prob[Z \ge \starz + \lambda \sqrts]
    \;\le\; 2ne^{-\lambda ^2 /4}.
    \label{eqn: n bound}
\end{equation}

Likewise, $\Exp[\sum_{i=1}^m X_i] = \sum_{i=1}^m \starx_i = s$.
By the Chernoff bound,
$\Prob[ | \textstyle{\sum_{i=1}^m X_i} - s | \ge \epsilon s]
        \le 2e^{- \epsilon ^2 s/4}$,
for $0< \epsilon \le 1$. Thus we have
\begin{equation}
\Prob[ | \textstyle{\sum_{i=1}^m X_i} - s | \ge \delta \sqrts]
        \;\le\; 2e^{- \delta ^2 /4},
        \label{eqn: constraint violation maxLP}
\end{equation}
where $0< \delta \le \sqrts$.

Since $\tildeZ \le Z + |\sum_{i=1}^m X_i - s|$, we get
$
\Prob[\tildeZ \ge \starz + \lambda \sqrts + \delta \sqrts]
  \le \Prob[Z \ge \starz + \lambda \sqrts] +
    \Prob[|\sum_{i=1}^m X_i - s| \ge \delta \sqrts]
$.
Combining this with (\ref{eqn: n bound}) and
(\ref{eqn: constraint violation maxLP}), we have
\begin{equation}
\Prob[\tildeZ \ge \starz + ( \lambda + \delta ) \sqrts]
        \;\le\; 2ne^{-\lambda ^2 /4} + 2e^{-\delta ^2 /4}.
    \label{eqn: RDM tildeZ bound}
\end{equation}

Choose $\lambda=\sqrt{4 \ln (8n)}$ and $\delta=\sqrt{4 \ln 8}$.
(Note that $\delta \le \sqrts$, since $s\ge 9$.)
When $\lambda \le \sqrt{s/2 - \starz}$, from (\ref{eqn: RDM tildeZ bound}),
with probability at least $\half$, we get
\begin{equation}
\tildeZ \le \starz + ( \sqrt{4 \ln (8n)} + \sqrt{4 \ln 8} ) \sqrts.
    \label{eqn: RDM final inequ}
\end{equation}

If $s< 4\sqrt{\ln (8n)}$, then $\sqrt{4s \ln (8n)} > s/2$,
inequality (\ref{eqn: RDM final inequ}) will be trivially true. Suppose
$s\ge 4\sqrt{\ln (8n)}$ and $\starz > s/2 - \sqrt{4\ln (8n)}$
(i.e., $\lambda > \sqrt{s/2 - \starz}$).
Then $\starz + \sqrt{4s \ln (8n)} \ge s/2$,
and inequality (\ref{eqn: RDM final inequ}) is also trivially true.
Thus by (\ref{eqn: RDM final inequ}) together with the bound
$\Dmax{\ast}(G,s) \ge \starz$, we obtain the lemma.
\end{proof}


\subsection{Algorithm~{\RCA} for {\BCPCavg}}
\label{sec: RCA algorithm}
In this section we present our randomized algorithm {\RCA}
for {\BCPCavg}.
Given $G = (C,P,E)$, again let $A$ be the Boolean $m\times n$
adjacency matrix of $G$.
Then {\BCPCavg} is equivalent to the following
integer linear program {\avgIP}:
\begin{eqnarray*}
\mbox{\rm maximize:}
    \quad \frac{1}{n}\sum_{j=1}^n z_j &&
                        \\
\mbox{\rm subject to:}
    \quad z_j \;&\le\;& \textstyle{\sum_{i=1}^m a_{ij} x_i} \quad\forall j=1,...,n
                        \\
    \quad z_j \;&\le\;& \textstyle{\sum_{i=1}^m (1-a_{ij}) x_i} \quad\forall j=1,...,n
                        \\
    \quad \textstyle{\sum_{i=1}^m x_i} \;&\le\;& s
                        \\
        x_i \;&\in\;& \braced{0,1} \quad\forall i=1,...,m
\end{eqnarray*}
The Boolean variables $x_i$ indicate whether the
corresponding $c_i \in C$ are selected or not.

\myparagraph{Algorithm~{\RCA}.} The algorithm first relaxes the
last constraint to $0\leq x_i \leq 1$ to obtain the linear program
{\avgLP}, and then computes an optimal solution $\starx_i$,
$i=1,2,...,m$, of {\avgLP}. Next, applying randomized rounding,
{\RCA} computes an integral solution $X_1,...,X_m$ by choosing
$X_i = 1$ with probability $\starx_i$ and $0$ otherwise. Note that
this solution may not be feasible since $\sum_{i=1}^m X_i$ may
exceed $s$. Let $L = \max \{\sum_{i=1}^m X_i - s, 0\}$. {\RCA}
changes $L$ arbitrary variables $X_i = 1$ to $0$, obtaining a
feasible solution $\tildeX_1,...,\tildeX_m$.


One can show that, in expectation,
for any instance $\angled{G,s}$,
{\RCA} finds a solution with objective value at least
$\Cavg{\ast}(G,s) - O( \sqrts )$.
We omit the proof because in the next section
we provide an algorithm with a better asymptotic bound.


\subsection{An Alternative Algorithm {\RCAtwo} for {\BCPCavg}}
\label{sec: RCAtwo algorithm}

We now modify Algorithm~{\RCA}, to improve its approximation
bound.
Let $\starz$ be the optimum solution of {\avgLP},
that is
$\starz = \frac{1}{n}\sum_{j=1}^n
        \min \{\sum_{i=1}^m a_{ij}\starx_i, \sum_{i=1}^m (1- a_{ij}) \starx_i\}$.
Our new Algorithm~{\RCAtwo} is identical to {\RCA} in all steps except for
the rounding scheme:
choose $X_i = 1$ with probability $\frac{\starx_i}{1+\lambda}$
and $0$ otherwise, where $\lambda = \frac{1}{\sqrtstarz}$
(without loss of generality, assuming $\starz > 0$).

Before we start {\RCAtwo}'s analysis,
we state and prove a variant of the Chernoff bound needed to
estimate the error introduced by changing $L$ variables $X_i$
at the end of the algorithm.


\begin{lemma}\label{lem: Chernoff-Young}
Let $Y_1, Y_2, ..., Y_n$ be $n$ independent Bernoulli trials,
where $\Pr [Y_i = 1]=p_i$. Then if $Y= \sum_{i=1}^n Y_i$ and if
$\Exp[Y] =\sum_i p_i \le \mu$, for any $0< \epsilon \le 1$:
\begin{equation}
\Exp[\max \{0, Y-(1+\epsilon)\mu\}]
\;\le\;
   \frac{2e^{-\mu \epsilon ^2 /4}} {\ln (1+\epsilon)}.
        \label{eqn: Chernoff-Young}
\end{equation}
\end{lemma}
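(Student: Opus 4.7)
The plan is to use a moment-generating-function (MGF) argument in the style of Chernoff, applied directly to the truncated random variable rather than only to its tail probabilities. The key ingredient is the pointwise inequality
\[
\max\{0,a\} \;\le\; \frac{\beta^{\,a}}{\ln \beta},
\]
valid for every real $a$ and every $\beta>1$. For $a\ge 0$ this reduces to $\beta^a = e^{a\ln\beta}\ge 1+a\ln\beta\ge a\ln\beta$, which follows from convexity of the exponential; for $a<0$ the left-hand side is zero while the right-hand side is strictly positive, so the inequality is immediate.

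Setting $a=Y-(1+\epsilon)\mu$ and taking expectations yields
\[
\Exp[\max\{0,Y-(1+\epsilon)\mu\}] \;\le\; \frac{\beta^{-(1+\epsilon)\mu}\,\Exp[\beta^{Y}]}{\ln\beta}.
\]
Since the $Y_i$ are independent Bernoullis, $\Exp[\beta^{Y}] = \prod_i \Exp[\beta^{Y_i}] = \prod_i (1+(\beta-1)p_i)$, and applying $1+x\le e^x$ together with $\sum_i p_i \le \mu$ gives $\Exp[\beta^{Y}]\le e^{(\beta-1)\mu}$. Choosing $\beta=1+\epsilon$ then collapses everything to
\[
\Exp[\max\{0,Y-(1+\epsilon)\mu\}] \;\le\; \frac{1}{\ln(1+\epsilon)}\left(\frac{e^{\epsilon}}{(1+\epsilon)^{1+\epsilon}}\right)^{\!\mu}.
\]
The standard Taylor-style estimate $\epsilon - (1+\epsilon)\ln(1+\epsilon) \le -\epsilon^{2}/3$, valid for $0<\epsilon\le 1$, yields $e^{\epsilon}/(1+\epsilon)^{1+\epsilon}\le e^{-\epsilon^2/3}\le e^{-\epsilon^2/4}$, so the right-hand side is at most $e^{-\mu\epsilon^2/4}/\ln(1+\epsilon)$, which is even a factor of $2$ stronger than the claimed bound.

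The only substantive step is the pointwise inequality; once it is in hand, everything else is routine MGF manipulation identical to what appears in the derivation of the multiplicative Chernoff bound. The natural-sounding alternative of integrating tail probabilities, $\Exp[\max\{0,Y-M\}]=\int_0^\infty \Prob[Y>M+t]\,dt$, would run into trouble for $t\gg \mu$, where the sub-Gaussian form of Chernoff no longer applies and one must patch in a weaker subexponential tail; the MGF approach avoids this case analysis by absorbing the truncation into the single multiplicative factor $1/\ln(1+\epsilon)$.
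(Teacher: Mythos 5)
Your proof is correct, and it takes a genuinely different route from the paper's. The paper's proof writes $\Exp[\max\{0,Y-(1+\epsilon)\mu\}]=\int_0^{\infty}\Prob[Y-(1+\epsilon)\mu\ge y]\,dy$, changes variables, inserts the full multiplicative Chernoff tail $c(\delta)^\mu$ with $c(\delta)=e^{\delta}/(1+\delta)^{1+\delta}$, and then spends most of its effort bounding the resulting integral $\int_{\epsilon}^{\infty}c(\delta)^{\mu}\,d\delta$: it shows that $f(x)=-\ln c(x)/x$ is increasing and at least $\ln(1+x)/2$, which produces the factor $2/(\mu\ln(1+\epsilon))$ and hence the constant $2$ in the statement. (Your closing remark slightly mischaracterizes that route: the paper never relies on the sub-Gaussian form for large deviations, so no ``patching'' of a subexponential tail is needed -- the monotonicity of $f$ plays that role.) You instead bound the positive part pointwise by $\max\{0,a\}\le \beta^{a}/\ln\beta$ for $\beta>1$, take expectations, and run the standard MGF computation with $\beta=1+\epsilon$: independence gives $\Exp[\beta^{Y}]=\prod_i\bigl(1+(\beta-1)p_i\bigr)\le e^{(\beta-1)\mu}$ using $\beta>1$ and $\sum_i p_i\le\mu$, and the standard estimate $\epsilon-(1+\epsilon)\ln(1+\epsilon)\le -\epsilon^{2}/3$ on $(0,1]$ (its derivative $2\epsilon/3-\ln(1+\epsilon)$ is nonpositive there) finishes the bound. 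This is more elementary -- no tail integration, no change of variables, no auxiliary-function analysis -- and it even yields the sharper constant $\Exp[\max\{0,Y-(1+\epsilon)\mu\}]\le e^{-\mu\epsilon^{2}/4}/\ln(1+\epsilon)$, which implies the lemma as stated.
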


\begin{proof}
See Appendix~\ref{sec: proof of chernoff-young}.
\end{proof}

\myparagraph{Analysis.} \label{sec: analysis RCA2}
 We denote by $\Cavg{\malyRCAtwo}(G,s)$ or $\tildeZ$ the value of the objective
function computed by {\RCAtwo} for {\BCPCavg}, that is $\Cavg{\malyRCAtwo}(G,s) =
    \tildeZ = \frac{1}{n}\sum_{j=1}^n \min
    \{\sum_{i=1}^m a_{ij} \tildeX_i,\sum_{i=1}^m (1- a_{ij}) \tildeX_i\}$.
Recall $\Cavg{\ast}(G,s)$ is the optimal value of $\Cavg{}(G,s)$
of {\BCPCavg}.

\begin{lemma}\label{lem: RCAtwo app bound}
For any instance $\angled{G,s}$ of {\BCPCavg},
\begin{equation}
\Exp[\Cavg{\malyRCAtwo}(G,s)]
\;\ge\;
    \Cavg{\ast}(G,s) -
    O\left( \sqrt{\Cavg{\ast}(G,s)} \right).
        \label{eqn: RCAtwo app bound}
\end{equation}
\end{lemma}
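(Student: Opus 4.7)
The plan is to decompose $\Exp[\tildeZ]$ into two pieces via the repair step. Writing $A_j = \sum_i a_{ij}X_i$, $B_j = \sum_i (1-a_{ij})X_i$, $Z_j = \min(A_j, B_j)$, and $L = \max\{0,\sum_i X_i - s\}$, flipping any single $X_i$ from $1$ to $0$ decreases each $A_j$ and each $B_j$ by at most $1$, so $\tildeZ_j \ge Z_j - L$ for every $j$ and therefore $\Exp[\tildeZ] \ge \frac{1}{n}\sum_j \Exp[Z_j] - \Exp[L]$. The proof therefore reduces to establishing (i) $\frac{1}{n}\sum_j \Exp[Z_j] \ge \starz/(1+\lambda) - O(\sqrtstarz)$, and (ii) $\Exp[L] = O(\sqrtstarz)$, since together with $\lambda\starz = \sqrtstarz$ these yield $\Exp[\tildeZ] \ge \starz - O(\sqrtstarz)$.

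For (i) the crucial observation is that $A_j$ and $B_j$ are \emph{independent}, because they are sums over the disjoint index sets $\{i:a_{ij}=1\}$ and $\{i:a_{ij}=0\}$ of the independent Bernoulli variables $X_i$. Using $\min(a,b) = \half(a+b-|a-b|)$, lower-bounding $\Exp[Z_j]$ reduces to upper-bounding $\Exp[|A_j - B_j|]$. By Jensen's inequality,
\[
\Exp[|A_j - B_j|] \;\le\; \sqrt{\Exp[(A_j-B_j)^2]} \;=\; \sqrt{\mathrm{Var}(A_j) + \mathrm{Var}(B_j) + (\mu_j^A - \mu_j^B)^2},
\]
where $\mu_j^A = \Exp[A_j] = \sum_i a_{ij}\starx_i/(1+\lambda)$ and similarly for $\mu_j^B$. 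Bounding $\mathrm{Var}(A_j)\le\mu_j^A$, $\mathrm{Var}(B_j)\le\mu_j^B$, and writing $c_j = \min(\mu_j^A,\mu_j^B) = \starz_j/(1+\lambda)$, a direct calculation on the function $d\mapsto \sqrt{d^2+m}-d$ (with $d = |\mu_j^A-\mu_j^B|$, $m=\mu_j^A+\mu_j^B$) shows that the gap $\min(\mu_j^A,\mu_j^B) - \Exp[Z_j]$ is at most $\sqrt{c_j/2} + O(1)$. Summing over $j$ and applying Cauchy--Schwarz gives $\sum_j \sqrt{c_j} \le \sqrt{n\sum_j c_j} = n\sqrt{\starz/(1+\lambda)}$, so dividing by $n$ yields (i).

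For (ii) I would apply Lemma~\ref{lem: Chernoff-Young} to $Y = \sum_i X_i$ with $\mu = s/(1+\lambda)\ge \Exp[Y]$ and $\epsilon = \lambda$, so that $(1+\epsilon)\mu = s$ and $L = \max\{0, Y - (1+\epsilon)\mu\}$. The lemma gives $\Exp[L] \le 2 e^{-\mu\lambda^2/4}/\ln(1+\lambda)$, where $\mu\lambda^2 = s\lambda^2/(1+\lambda) = s/(\starz(1+\lambda)) \ge 2/(1+\lambda)$ since $\starz\le s/2$, so the exponential factor is bounded by a constant; and for $\lambda\le 1$ we have $\ln(1+\lambda) = \Omega(\lambda) = \Omega(1/\sqrtstarz)$, yielding $\Exp[L] = O(\sqrtstarz)$. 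The edge case $\lambda > 1$ (equivalently $\starz < 1$) renders the whole lemma trivial since $\Exp[\tildeZ]\ge 0$.

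Combining (i) and (ii) produces $\Exp[\tildeZ] \ge \starz - O(\sqrtstarz)$. To replace $\starz$ by $\Cavg{\ast}$, observe that $f(x) = x - c\sqrt x$ is nondecreasing for $x\ge c^2/4$; since $\Cavg{\ast}\le\starz$, this gives $\starz - c\sqrtstarz \ge \Cavg{\ast} - c\sqrt{\Cavg{\ast}}$ whenever $\Cavg{\ast}\ge c^2/4$, while for smaller $\Cavg{\ast}$ the claim follows immediately from $\Exp[\tildeZ]\ge 0$. The main technical obstacle is the per-probe bound on $\Exp[Z_j]$: the concavity of $\min$ only yields a free upper bound $\Exp[Z_j] \le \min(\mu_j^A,\mu_j^B)$, and it is precisely the independence of $A_j$ and $B_j$ that permits the Jensen estimate above to deliver a matching lower bound of the right $O(\sqrt{c_j})$ order.
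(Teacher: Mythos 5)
Your proof is correct, and it differs from the paper's in the one step that matters. The overall skeleton is the same: you bound the repair cost via $\tildeZ \ge Z - L$, control $\Exp[L]$ by applying Lemma~\ref{lem: Chernoff-Young} with $\epsilon=\lambda$ so that $(1+\epsilon)\frac{s}{1+\lambda}=s$ (exactly the paper's choice $\delta=\frac{1}{\sqrtstarz}$, giving $\Exp[L]=O(\sqrtstarz)$), and finish by monotonicity of $x-c\sqrt{x}$ to pass from $\starz$ to $\Cavg{\ast}(G,s)$, with the small-optimum case handled trivially. Where you diverge is the lower bound on $\Exp[Z_j]$: the paper invokes the Chernoff--Wald bound of Young to get $\Exp\bigl[\max\bigl\{(1-\epsilon)\frac{\starz_j}{1+\lambda}-(1+\epsilon)\barZ_j,\,(1-\epsilon)\frac{\starz_j}{1+\lambda}-(1+\epsilon)\hatZ_j\bigr\}\bigr]\le \frac{\ln 2}{\epsilon}$ and then tunes $\epsilon=\frac{1}{2\sqrtstarz}$, whereas you use the identity $\min(a,b)=\frac{1}{2}(a+b-|a-b|)$, the fact that the two sums run over disjoint index sets (so their covariance vanishes), the second-moment estimate $\Exp[|A_j-B_j|]\le\sqrt{\mathrm{Var}(A_j)+\mathrm{Var}(B_j)+(\mu_j^A-\mu_j^B)^2}$ with $\mathrm{Var}\le\mu$, and then Cauchy--Schwarz across probes; I checked the elementary estimate and it indeed gives a per-probe gap of at most $\sqrt{c_j/2}+\frac14$, so the aggregation yields $\Exp[Z]\ge\frac{\starz}{1+\lambda}-O(\sqrtstarz)$ with constants at least as good as the paper's. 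What each approach buys: yours is fully self-contained and elementary (no external concentration lemma is needed for this step, only for $\Exp[L]$), and it exploits a structural feature (independence of $A_j$ and $B_j$) specific to this objective; the paper's route is shorter in bookkeeping, reuses a single ready-made inequality for both coordinates of the $\min$, and generalizes more readily to settings where the two sums are not independent.
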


\begin{proof}
We can assume that $\Cavg{\ast}(G,s) \ge 1$,
because otherwise (\ref{eqn: RCAtwo app bound}) is trivially true.
Thus $\starz \ge 1$ as well, since $\starz \ge \Cavg{\ast}(G,s)$.

For all $j$ define constants
$\starz_j = \min \{\sum_{i=1}^m a_{ij}\starx_i, \sum_{i=1}^m (1- a_{ij}) \starx_i\}$
and variables
$\barZ_j = \sum_{i=1}^m a_{ij} X_i$,
$\hatZ_j = \sum_{i=1}^m (1- a_{ij}) X_i$ and
$Z_j = \min \{\barZ_j, \hatZ_j\}$.
Thus we have
$\starz = \frac{1}{n}\sum_{j=1}^n \starz_j$
and
$Z = \frac{1}{n}\sum_{j=1}^n Z_j$.

The $\{X_i\}$ are independent Bernoulli random variables
with $\Exp[X_i]=\frac{\starx_i}{1+\lambda}$.
So $\Exp[\barZ_j] \ge \frac{\starz_j}{1+\lambda}$ and
$\Exp[\hatZ_j] \ge \frac{\starz_j}{1+\lambda}$, for each $j$.
Applying the Chernoff-Wald bound \cite{young00kmedians}, we get
\begin{eqnarray*}
 &&\textstyle
 \Exp[(1-\epsilon)\frac{\starz_j}{1+\lambda} - (1+\epsilon) Z_j]
 \;\le\; \\
 &&\Exp\left [\max \left\{ (1-\epsilon)\frac{\starz_j}{1+\lambda} - (1+\epsilon)
 \barZ_j , (1-\epsilon)\frac{\starz_j}{1+\lambda} - (1+\epsilon) \hatZ_j \right\}\right ]
 \;\le\;
 \frac{\ln 2}{\epsilon},
\end{eqnarray*}
where $0< \epsilon \le \half$. Since
$\frac{1-\epsilon}{1+\epsilon} \ge 1-2\epsilon$,
$\Exp[Z_j] \;\ge\; \frac{\starz_j}{1+\lambda} -
 2 \left(\frac{\epsilon \starz_j}{1+\lambda} + \frac{1}{\epsilon} \right) $,
and thus we have
\begin{equation*}
\Exp[Z] \;\ge\; \frac{\starz}{1+\lambda} -
2\left(\frac{\epsilon \starz}{1+\lambda} + \frac{1}{\epsilon} \right).
\end{equation*}
In the above inequality we substitute $\lambda = \frac{1}{\sqrtstarz}$
and choose
$\epsilon = \frac{1}{2\sqrt {\starz}}$, which, by simple algebra,
yields
\begin{equation}
 \Exp[Z] \;\ge\; \starz - 6\sqrtstarz.
\label{eqn: bcpavgtwo Z bound}
\end{equation}
Likewise,
$\Exp[\sum_{i=1}^m X_i] = \frac{1}{1+\lambda}\sum_{i=1}^m \starx_i
\le \frac{s}{1+\lambda}$.
By Lemma~\ref{lem: Chernoff-Young}, we have
\begin{equation*}
\textstyle
 \Exp[\max\{0, \sum_{i=1}^m X_i - (1+\delta) \frac{s}{1+\lambda}\}]
 \;\le\; \frac{2 e^{- \frac{s}{1+\lambda} \delta ^2 /4} } {\ln (1+\delta)},
\end{equation*}
where $0< \delta \le 1$.
Letting $\delta = \frac{1}{\sqrtstarz}$ and substituting
$\lambda = \frac{1}{\sqrtstarz}$, the above inequality
implies
\begin{equation}
\Exp[L]
\;\le\;
    \frac{2 e^{- \frac{s}{4( \starz + \sqrtstarz)}}}
        { \ln (1+\frac{1}{\sqrtstarz})}
\;\le\;
    \frac{2}{ \ln (1+\frac{1}{\sqrtstarz})}
\;\le\;
    4\sqrtstarz,
\label{eqn: RCAtwo L bound}
\end{equation}
where the last inequality follows from $\ln(1+\epsilon) \ge \epsilon/2$
for $0 < \epsilon \le 1$.
Combining (\ref{eqn: bcpavgtwo Z bound}), (\ref{eqn: RCAtwo L bound}),
 and $\tildeZ \ge Z-L$, we get
\begin{equation*}
\Exp[\tildeZ] \;\ge\; \Exp[Z-L]
            \;\ge\; \starz - 10 \sqrtstarz.
\end{equation*}
Since $1\le \Cavg{\ast}(G,s)\le \starz$, the above bound implies
(\ref{eqn: RCAtwo app bound}).
\end{proof}

Note that performance bounds for {\RCAtwo} and {\RCA}
are weaker than those for the algorithms in the previous sections,
as it holds only in expectation.
Algorithm {\RCA}'s approximation error is
slightly worse than that of {\RCAtwo}. Nevertheless, our experimental
analysis (not included) show that on synthetic and real data sets,
{\RCAtwo} does not outperform {\RCA}.


\section{Experimental Analysis}
\label{sec: experiments}

We implemented Algorithms~{\RCM}, {\RCMtwo} and {\RDM}
using \emph{LP\_SOLVE} solver
\cite{lpsolve}, and tested their
performance on both synthetic and real data.


\myparagraph{Synthetic data.} We tested these three algorithms on
random data sets represented by adjacency matrices of four sizes
$(m,n) =$ $(100,30)$, $(100,100)$, $(200,60)$, $(200,200)$, where each
element of the matrix is chosen to be $1$ or $0$ with probability
$\half$. We ran these programs for $s=20, 21, ..., 90$, and compared
the solution to the optimal solution of the linear relaxation (that
is, {\RCM} and {\RCMtwo} were compared to {\minLP}, {\RDM} was
compared to {\maxLP}).

\begin{table}
\caption{Performance of {\RCM} on synthetic data with
         $m=100$ and $n=30$}
\label{tab2}
\begin{center}
\setlength{\tabcolsep}{3.7pt}
\setlength{\columnsep}{0mm}
{\footnotesize
\begin{tabular}{@{}cccccccccccccccc@{}}
    \hline
    {} &{} &{} &{} &{} {} &{} &{} &{} &{} {} &{} &{} &{} &{} &{}\\[-1.5ex]
    $s$ &20 &25 &30 &35 &40 &45 &50 &55 &60 &65 &70 &75 &80 &85 &90\\[1ex]
    \hline
    {} &{} &{} &{} &{} {} &{} &{} &{} &{} {} &{} &{} &{} &{} &{}\\[-1.5ex]
    {\minLP}  &10 &12.5 &15 &17.5 &20 &22.5 &25 &27.5 &29.82 &31.89 &33.88 &35.76 &37.54 &39.11 &40\\[1ex]
    {\RCM} &7 &10 &13 &15 &18 &19 &23 &25 &28 &30 &32 &34 &35 &38 &40\\[1ex]
    \hline
\end{tabular}
}
\end{center}
\end{table}

Table \ref{tab2} shows results of the comparison of {\RCM}'s
solution and the {\minLP} solution of {\BCPCmin} from the experiment
in which $m=100$ and $n=30$. This table presents only the
performance of a single run of {\RCM}, so the results are likely to
be even better if we run {\RCM} several times and choose the best
solution.

\begin{figure}[ht]
    \centering
    \begin{tabular}{cc}
    \begin{minipage}{2.4in}
    \includegraphics[height=1.5in, width=2.4in]{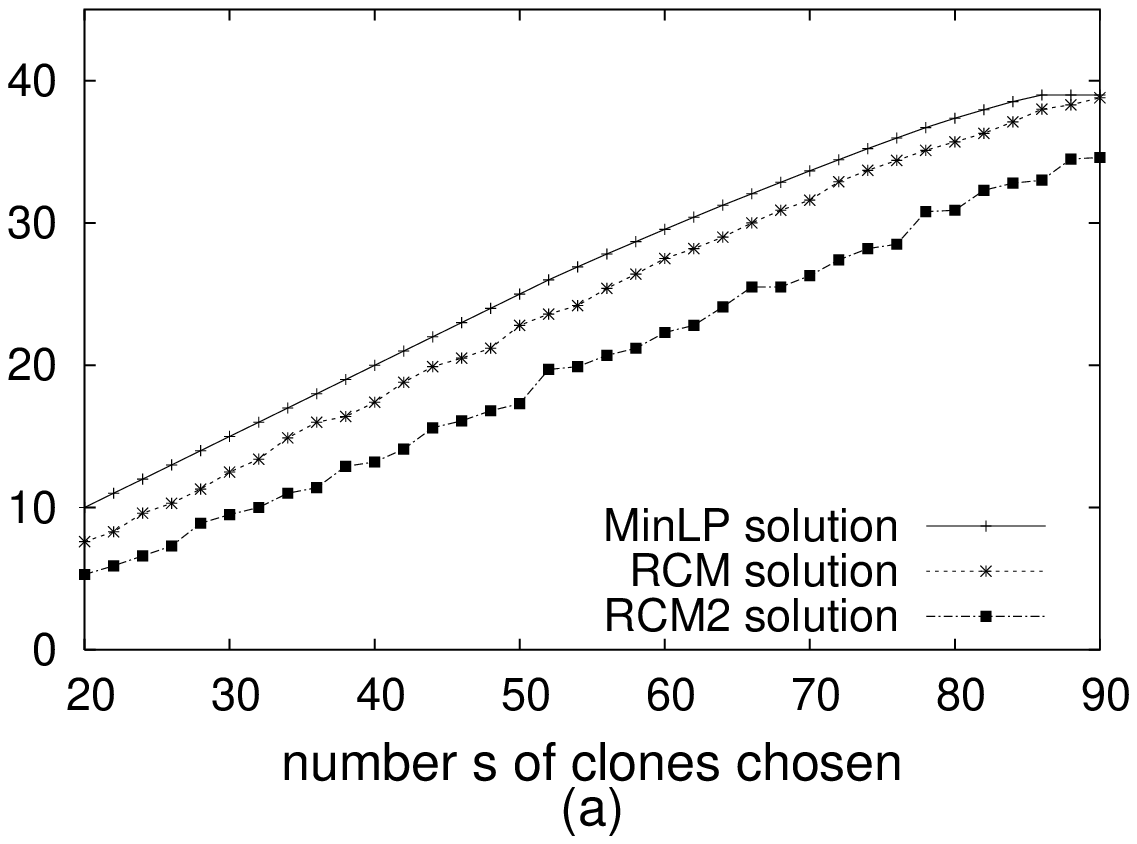}
    \end{minipage}
    &
    \begin{minipage}{2.4in}
    \includegraphics[height=1.5in, width=2.4in]{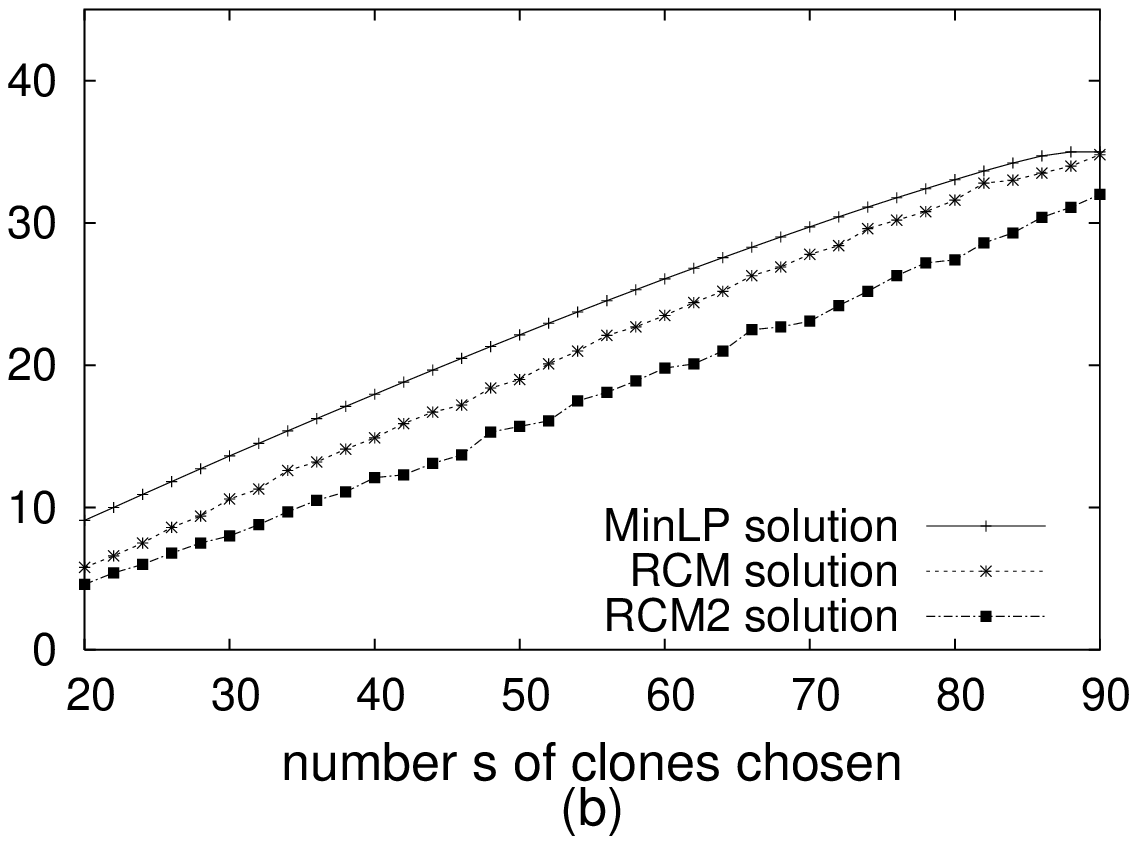}
    \end{minipage}
    \end{tabular}
\\
    \centering
    \begin{tabular}{cc}
    \begin{minipage}{2.4in}
    \includegraphics[height=1.5in, width=2.4in]{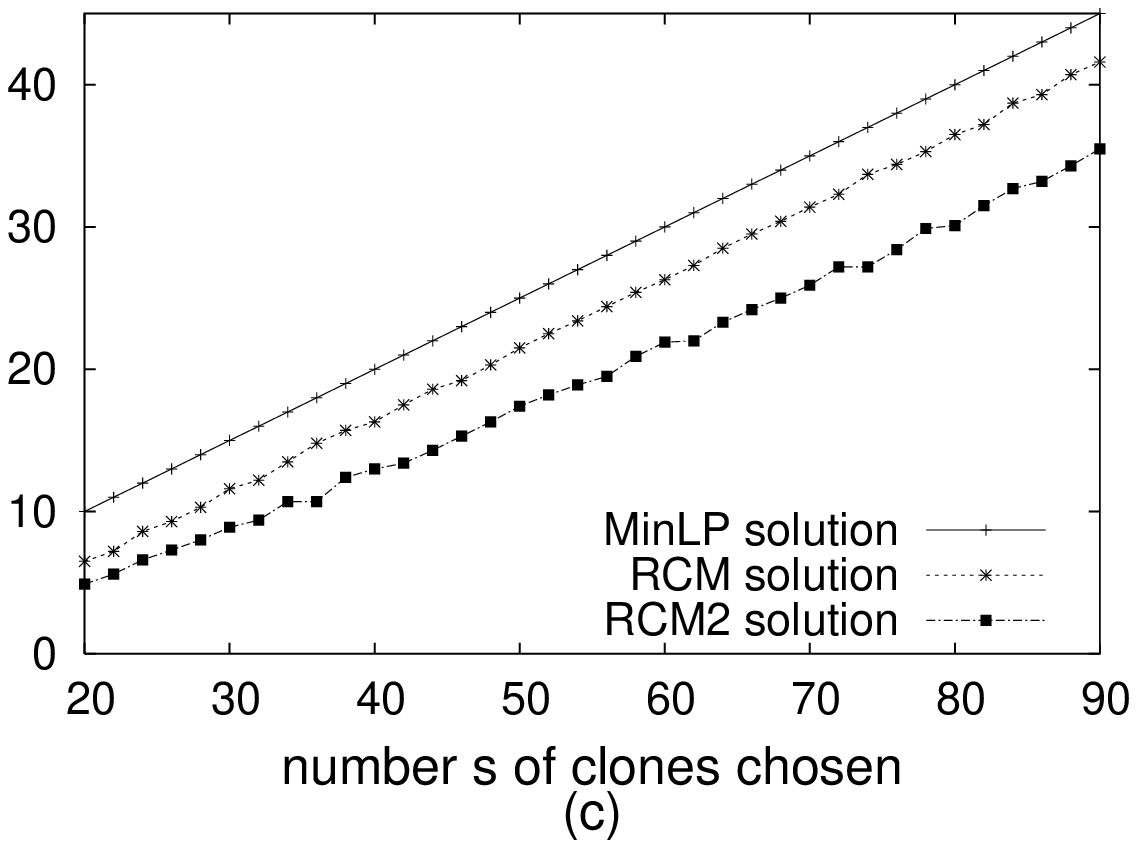}
    \end{minipage}
    &
    \begin{minipage}{2.4in}
    \includegraphics[height=1.5in, width=2.4in]{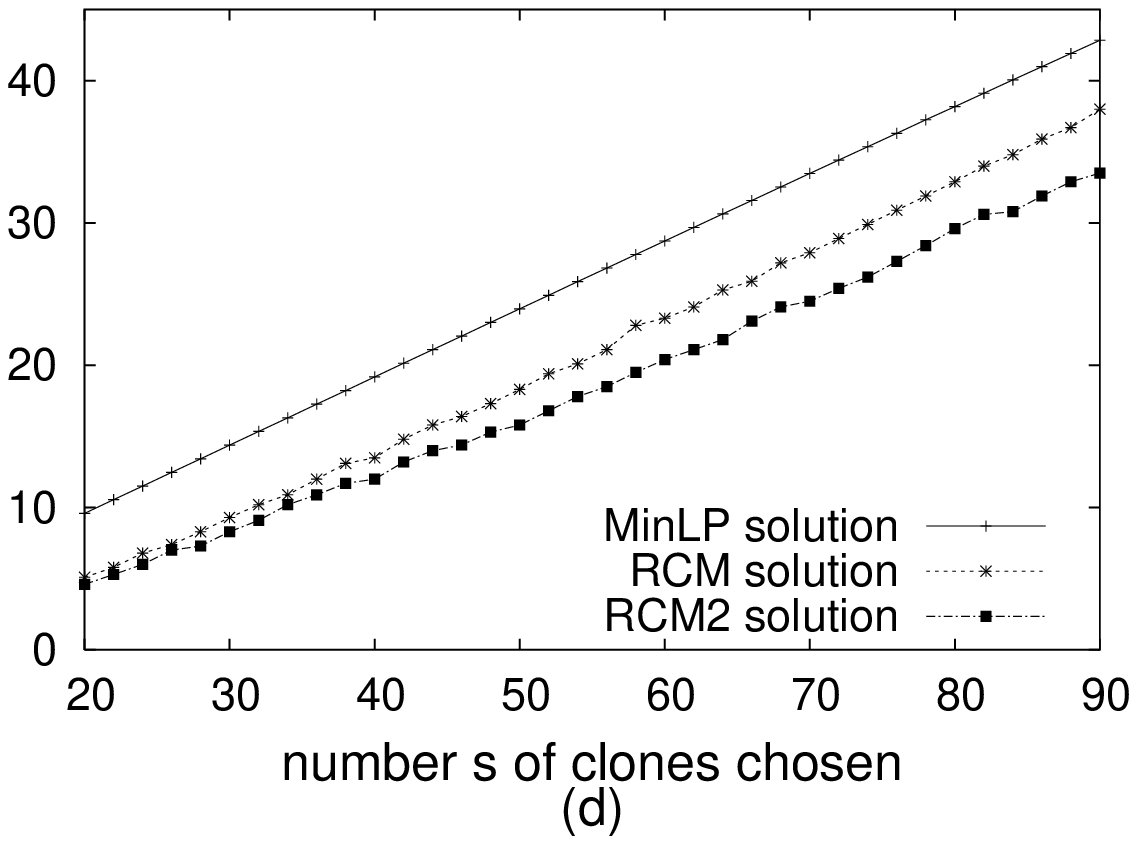}
    \end{minipage}
    \end{tabular}
    \caption{
    \small{
    {\RCM}'s and {\RCMtwo}'s performance for {\BCPCmin} on synthetic data
        for four matrices:
    (a) $(m,n) = (100,30)$;
        (b) $(m,n) = (100,100)$;
    (c) $(m,n) = (200,60)$;
        (d) $(m,n) = (200,200)$.
    The y-axis in the graph represents the objective value.
    }
    }
    \label{fig1}
\end{figure}

\begin{figure}[ht]
    \centering
    \begin{tabular}{cc}
    \begin{minipage}{2.4in}
    \includegraphics[height=1.5in, width=2.4in]{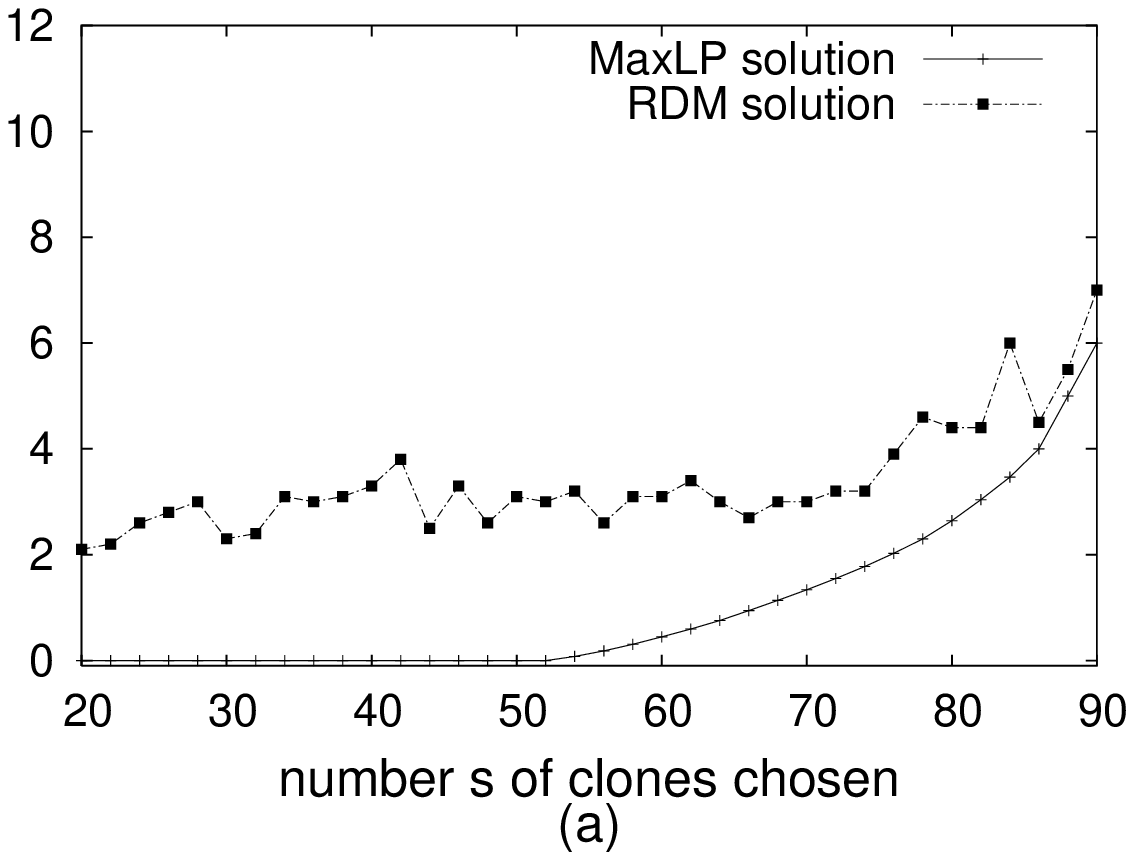}
    \end{minipage}
    &
    \begin{minipage}{2.4in}
    \includegraphics[height=1.5in, width=2.4in]{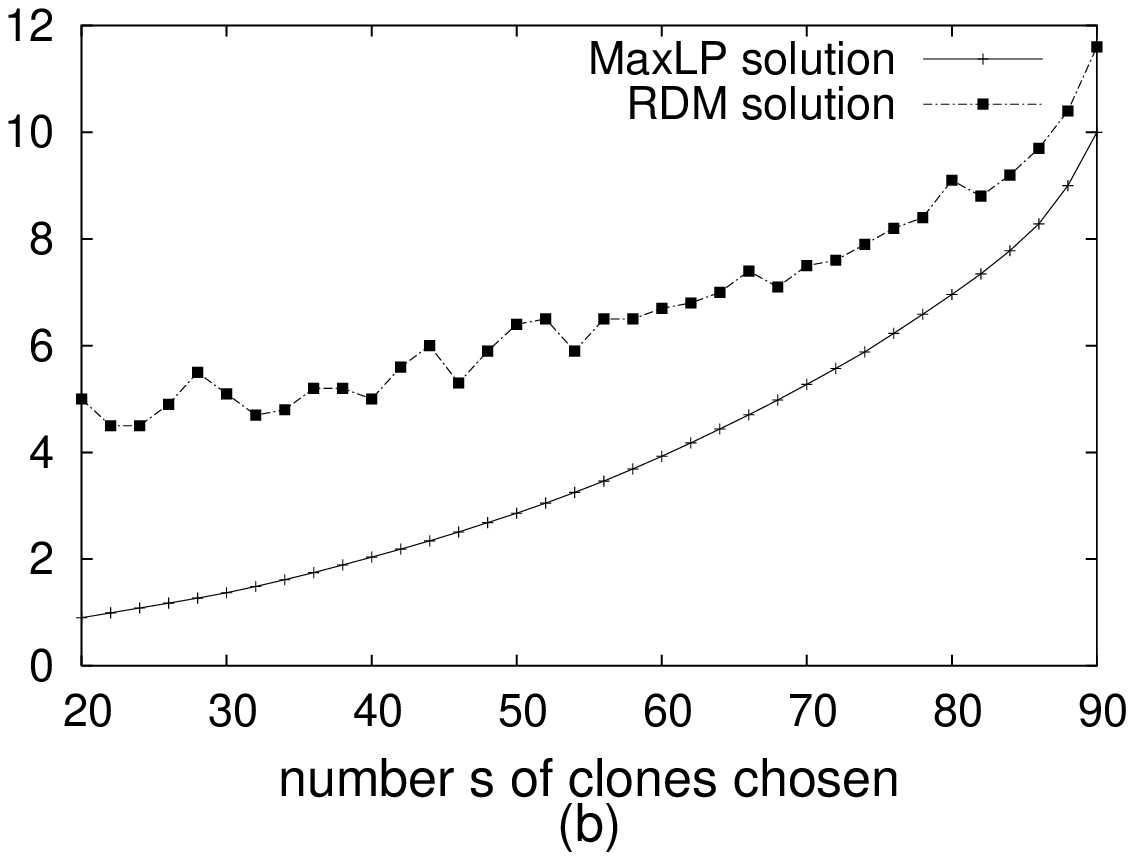}
    \end{minipage}
    \end{tabular}
\\
    \centering
    \begin{tabular}{cc}
    \begin{minipage}{2.4in}
    \includegraphics[height=1.5in, width=2.4in]{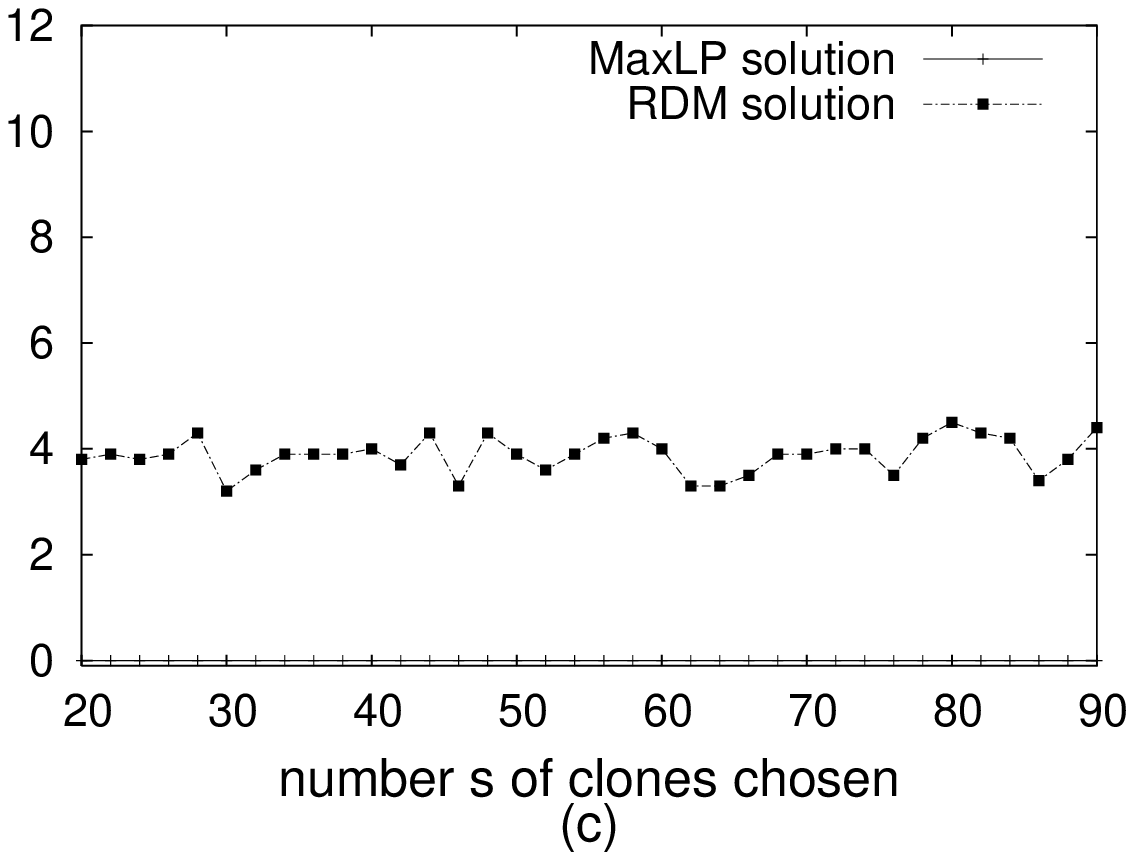}
    \end{minipage}
    &
    \begin{minipage}{2.4in}
    \includegraphics[height=1.5in, width=2.4in]{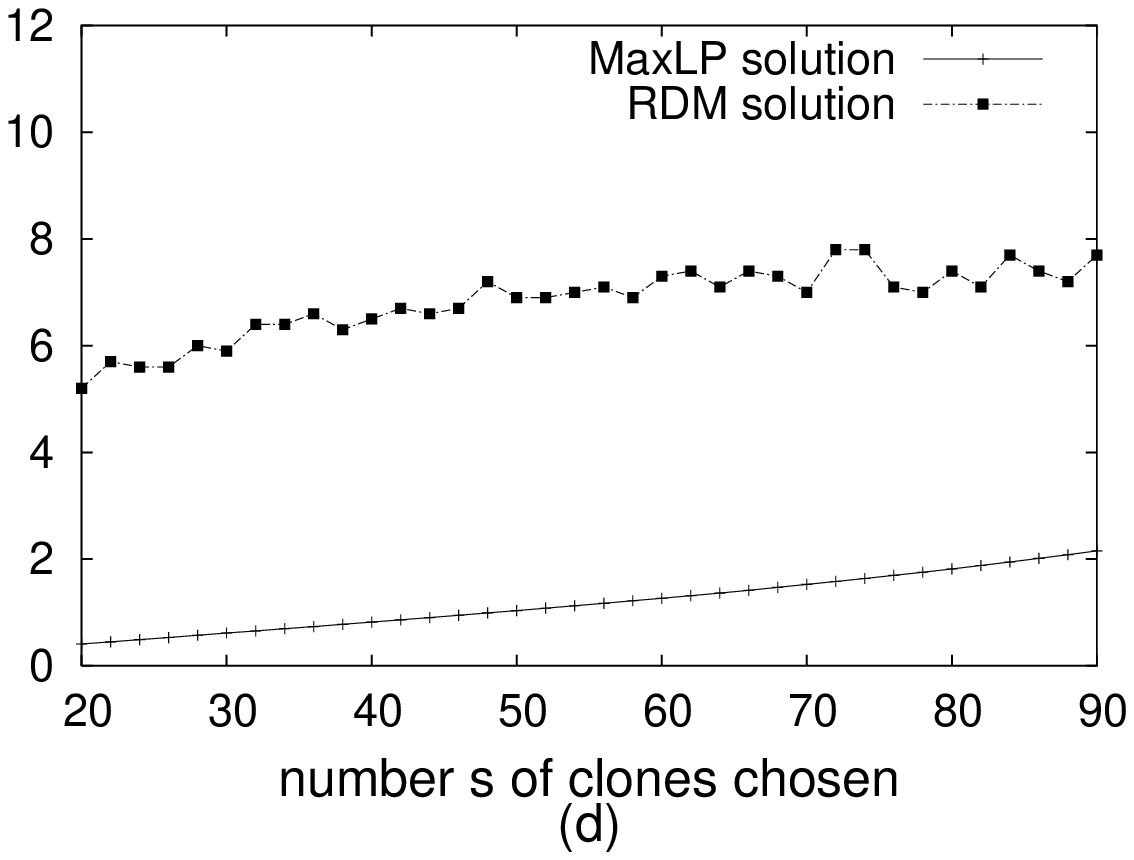}
    \end{minipage}
    \end{tabular}
    \caption{
    \small{
    {\RDM}'s performance for {\BCPDmax} on synthetic data
        for four matrices:
    (a) $(m,n) = (100,30)$;
        (b) $(m,n) = (100,100)$;
    (c) $(m,n) = (200,60)$;
        (d) $(m,n) = (200,200)$.
    The y-axis in the graph represents the objective value.
    }
    }
    \label{fig2}
\end{figure}

We also repeated our simulation test $10$ times
for each of the above settings and took the average of them.
Figure~\ref{fig1} and Figure~\ref{fig2} illustrate results of
these experiments.

It is worth observing that (on average)
{\RCM} was always able to find the solution close to that of
{\minLP}. Furthermore, since the true optimum (integral solution)
for {\BCPCmin} could be smaller than the solution for {\minLP}, our
approximation of {\RCM} could be even closer to the true optimum than
it appears. These observations apply to {\RDM} as well.

We have repeated these experiments for sparser random matrices, where
the values in the matrix were chosen to be $1$ with 
probability $\onefourth$ or $\oneeighth$. In all these experiments
the results were very similar to those for the distribution with
probability $\half$.


\myparagraph{Real data.} To test the performance of these three
algorithms on real data, we used four clone-probe adjacency matrices.
The first two matrices represent hybridization of $500$ bacterial
clones extracted from rRNA genes analyzed in \cite{Valin02b} with
two sets of $30$ and $40$ probes designed with the algorithm
in \cite{James01}. The other two matrices (with similar parameters)
represent hybridization of rRNA genes from fungal clones analyzed
in \cite{Valin02a} with their corresponding sets of probes. For each
of these four data sets (and for each $s=200, 210, ..., 400$)
we tested {\RCM} $10$ times, and took the average of them. We observe
that {\RCM} found the solution with value at least $97\%$ of
{\minLP}'s solution in $92.8\%$ cases. We also repeated our tests for
{\RCMtwo} and {\RDM} using the above data sets with the same
settings. The results are summarized in Figure \ref{fig3} and Figure
\ref{fig4}.

\begin{figure}[ht]
    \centering
    \begin{tabular}{cc}
    \begin{minipage}{2.4in}
    \includegraphics[height=1.5in, width=2.4in]{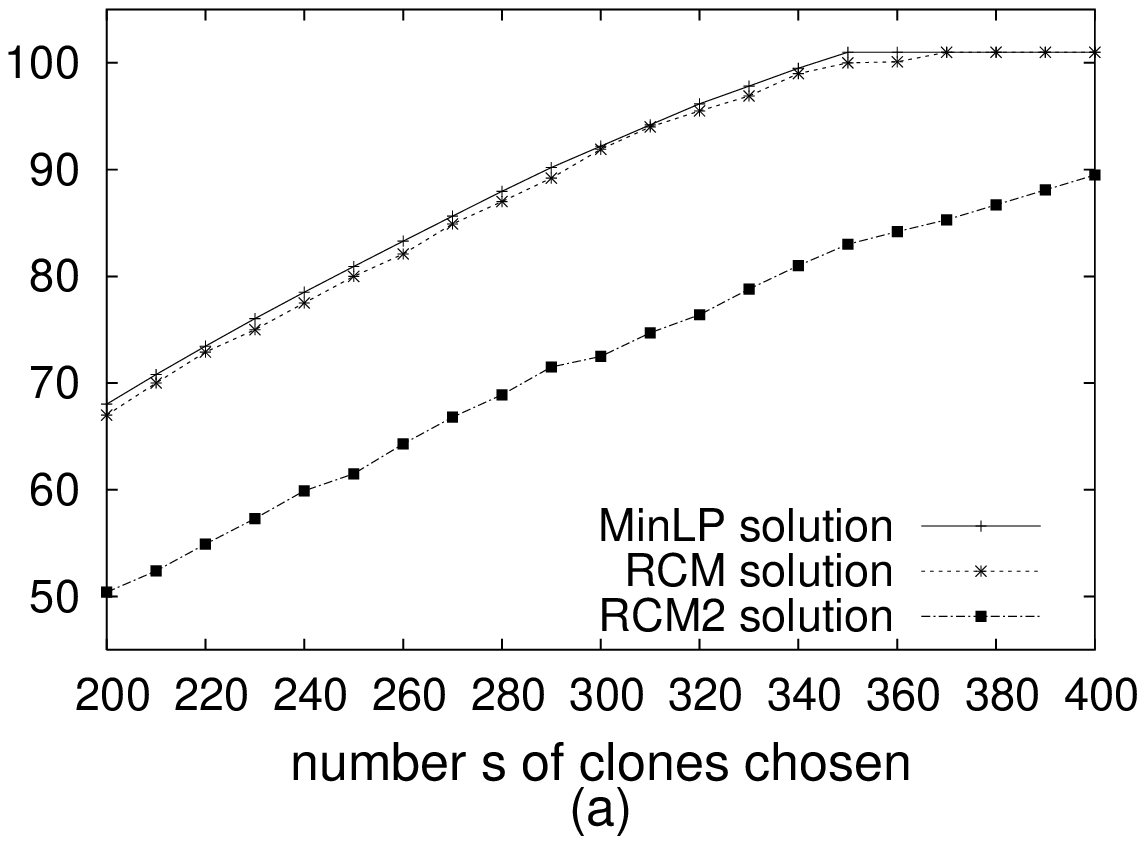}
    \end{minipage}
    &
    \begin{minipage}{2.4in}
    \includegraphics[height=1.5in, width=2.4in]{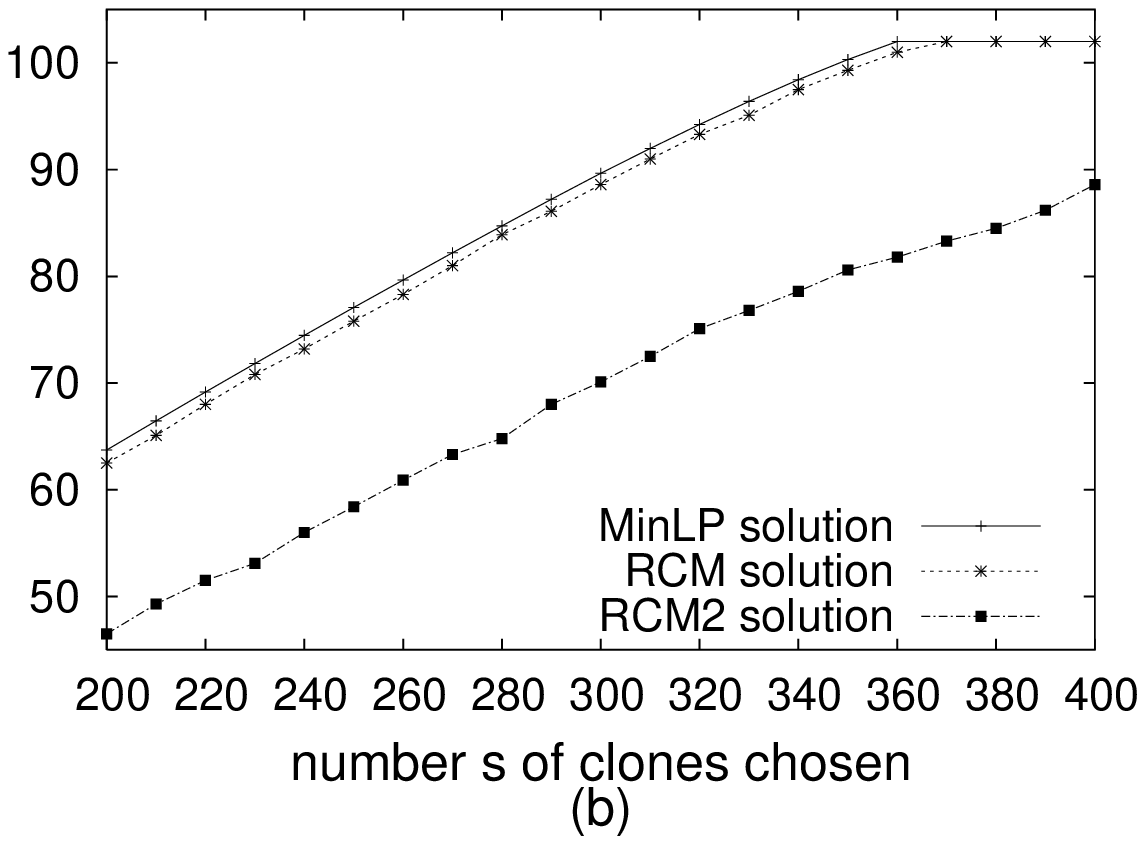}
    \end{minipage}
    \end{tabular}
\\
    \centering
    \begin{tabular}{cc}
    \begin{minipage}{2.4in}
    \includegraphics[height=1.5in, width=2.4in]{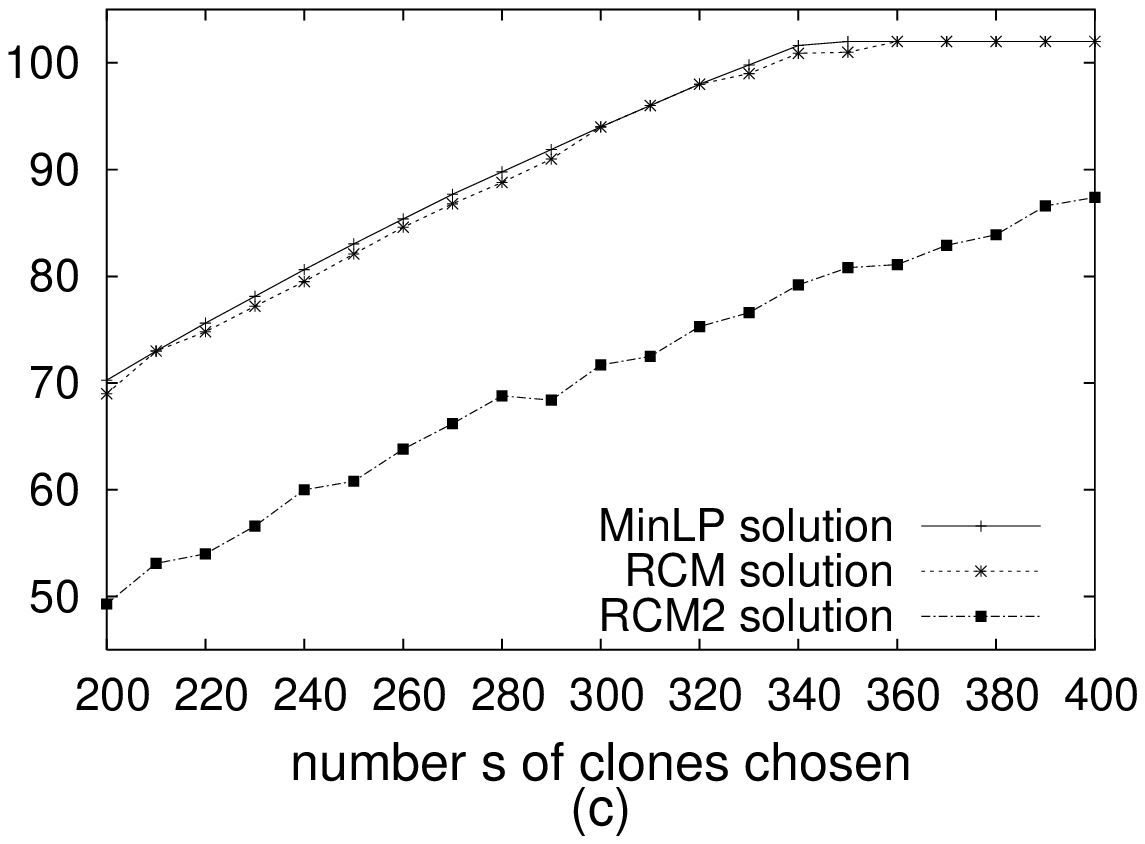}
    \end{minipage}
    &
    \begin{minipage}{2.4in}
    \includegraphics[height=1.5in, width=2.4in]{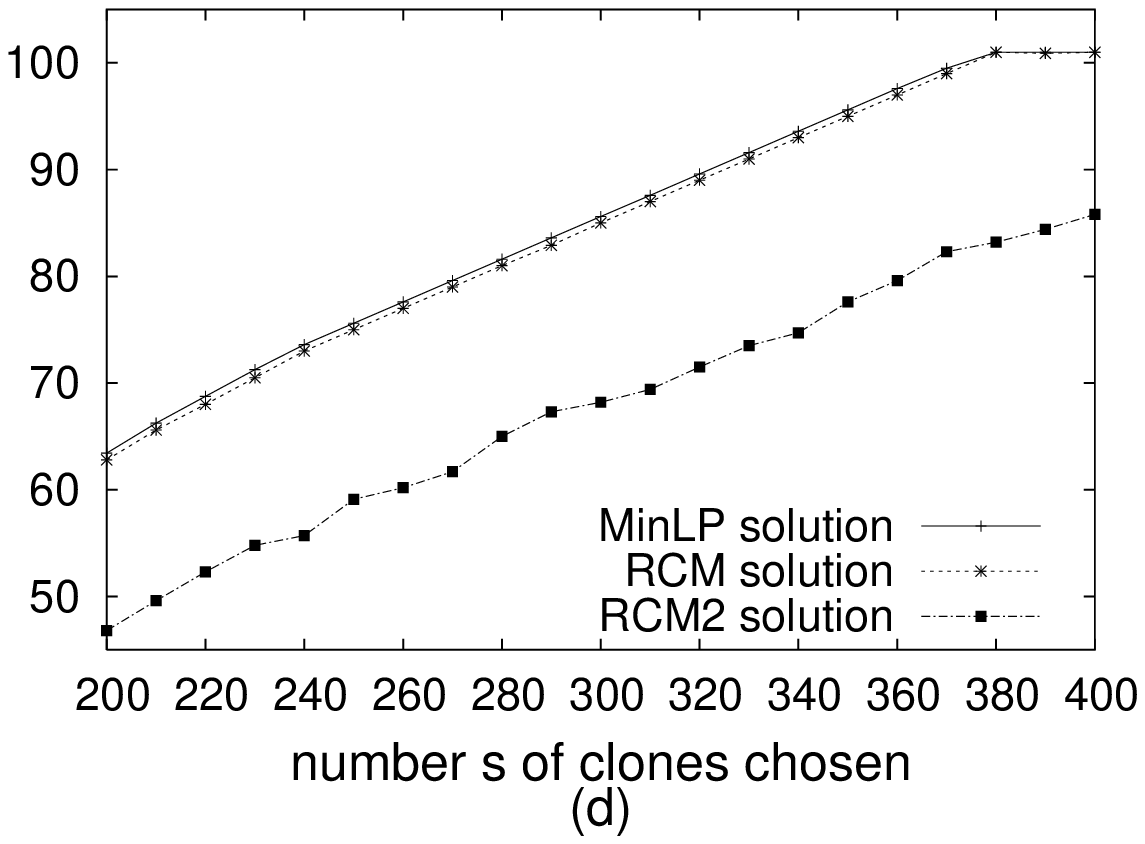}
    \end{minipage}
    \end{tabular}
    \caption{
    \small{
    {\RCM}'s and {\RCMtwo}'s performance on real data:
    (a) $500$ bacterial clones and $30$ probes;
        (b) $500$ bacterial clones and $40$ probes;
    (c) $500$ fungal clones and $30$ probes;
        (d) $500$ fungal clones and $40$ probes.
    The y-axis in the graph represents the objective value.
    }
    }
    \label{fig3}
\end{figure}

\begin{figure}[ht]
    \centering
    \begin{tabular}{cc}
    \begin{minipage}{2.4in}
    \includegraphics[height=1.5in, width=2.4in]{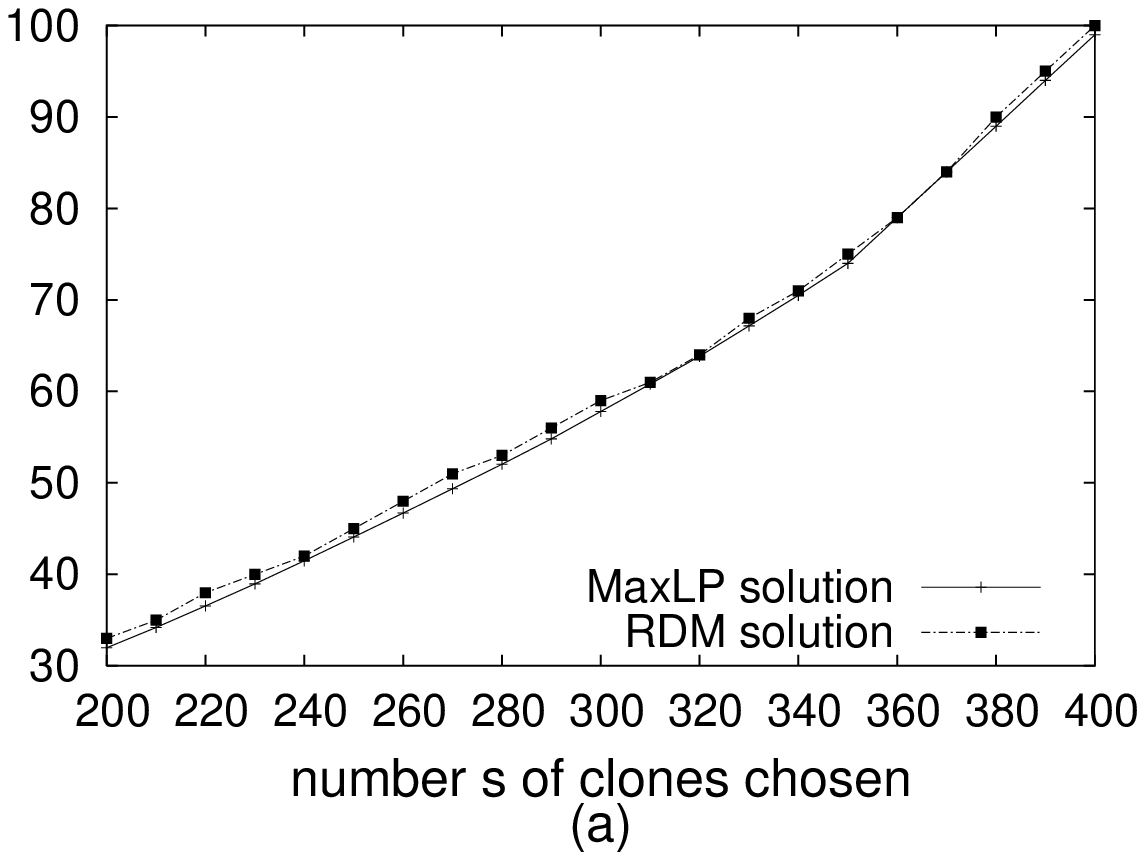}
    \end{minipage}
    &
    \begin{minipage}{2.4in}
    \includegraphics[height=1.5in, width=2.4in]{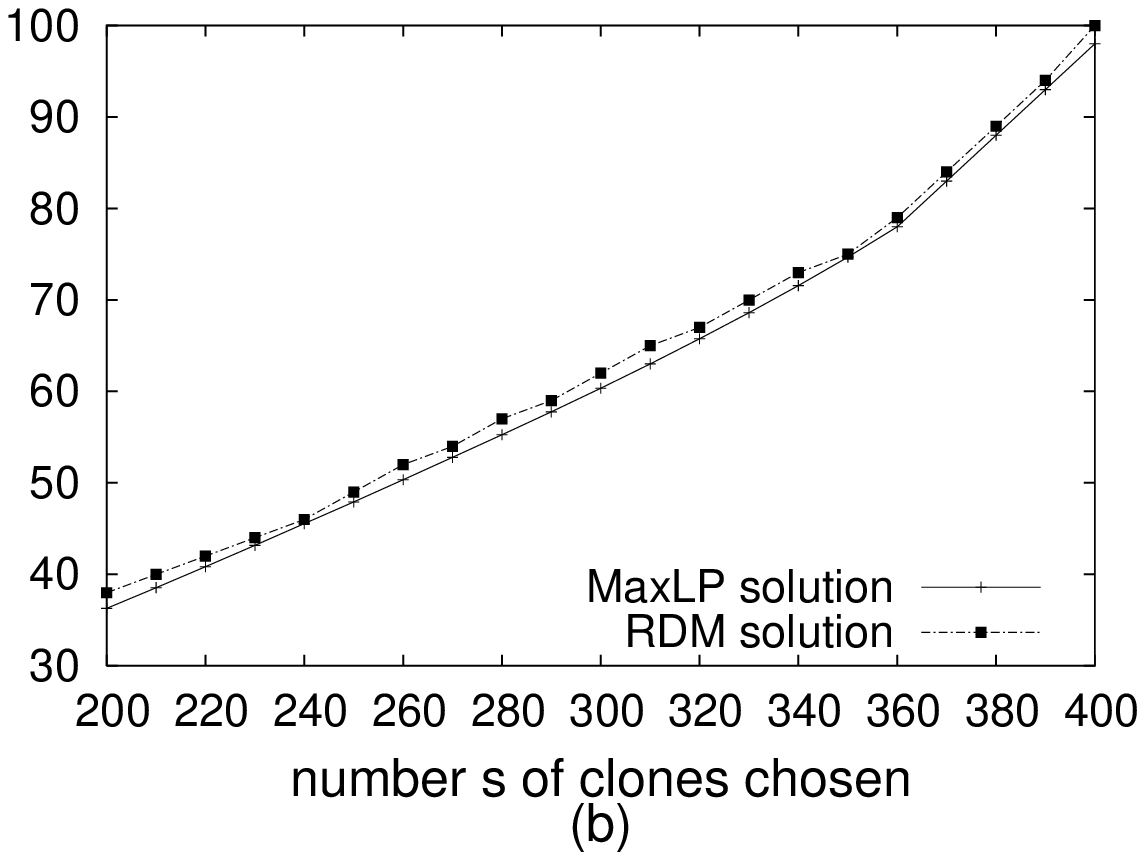}
    \end{minipage}
    \end{tabular}
\\
    \centering
    \begin{tabular}{cc}
    \begin{minipage}{2.4in}
    \includegraphics[height=1.5in, width=2.4in]{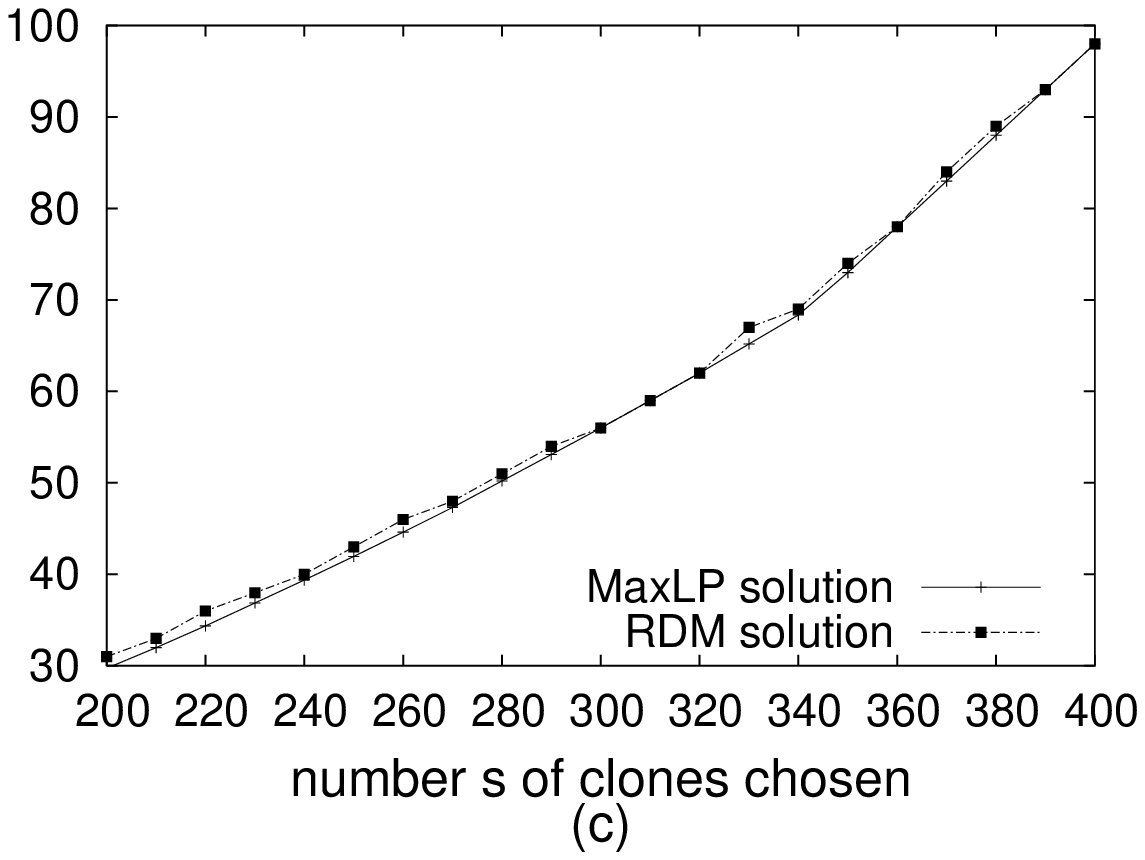}
    \end{minipage}
    &
    \begin{minipage}{2.4in}
    \includegraphics[height=1.5in, width=2.4in]{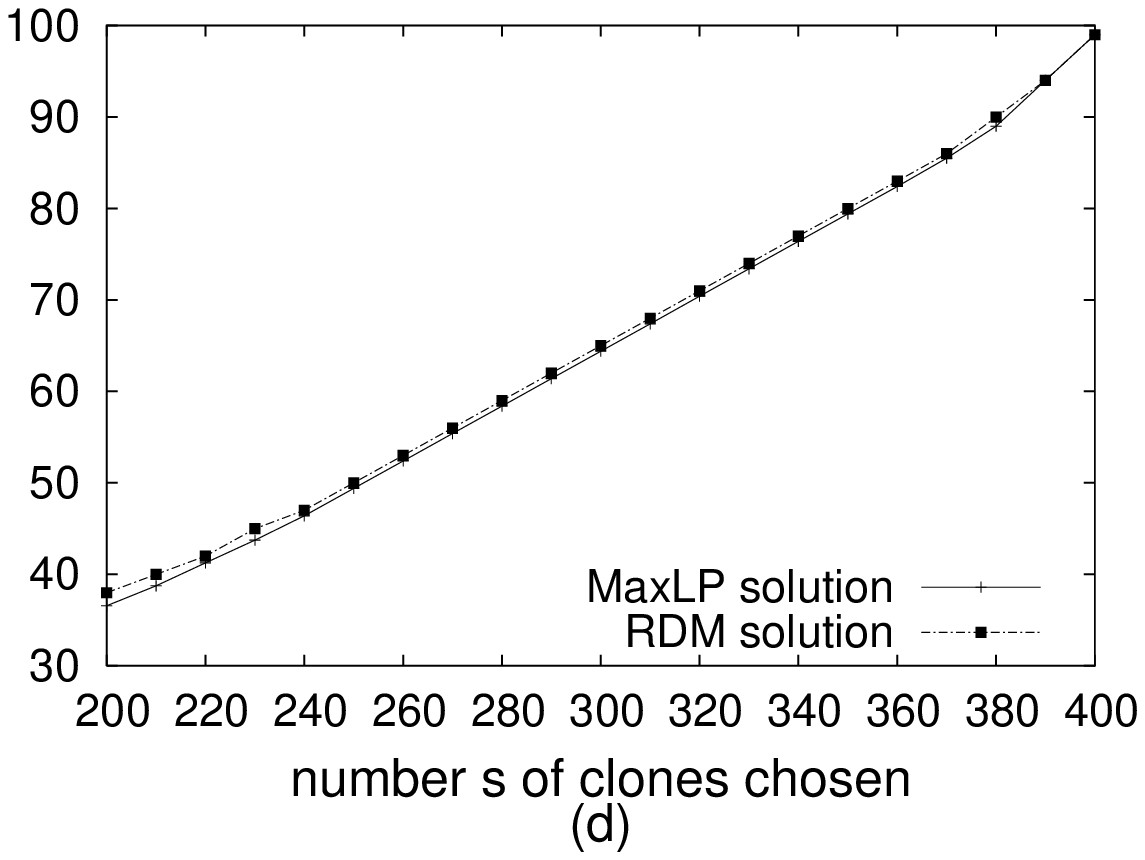}
    \end{minipage}
    \end{tabular}
    \caption{
    \small{
    {\RDM}'s performance on real data:
    (a) $500$ bacterial clones and $30$ probes;
        (b) $500$ bacterial clones and $40$ probes;
    (c) $500$ fungal clones and $30$ probes;
        (d) $500$ fungal clones and $40$ probes.
    The y-axis in the graph represents the objective value.
    }
    }
    \label{fig4}
\end{figure}

Figure~\ref{fig3} and Figure~\ref{fig4} suggest that solutions
found by {\RCM} and {\RDM} on real data are even closer to {\minLP} and
{\maxLP} solutions, respectively, than those for synthetic data,
sometimes even coinciding with {\minLP} and {\maxLP} solutions; i.e.,
{\RCM} and {\RDM} achieved optimum in some cases.

We also performed experimental analysis for {\RCA} and {\RCAtwo}
using the same synthetic and real data sets. The study
shows that {\RCA} and {\RCAtwo} approximate well
the optimum solutions. (The results are similar to those of
{\RCM} and {\RCMtwo} and are omitted.)

Our experimental results indicate that {\RCM} performs better than
{\RCMtwo} in practice even though, according to our
analysis in Section~\ref{sec: RCM algorithm} and~\ref{sec: RCMtwo algorithm},
{\RCMtwo} has a better asymptotic bound. This is likely to be
caused by a combination of several factors.
First, the constants in the asymptotic bounds for {\RCMtwo} appear
to be larger than those for {\RCM}, and our data sets may not be
large enough for the asymptotic trends to show.
Second, the bound for {\RCMtwo} is better than that for
{\RCM} only if the optimum is sufficiently small compared to $s/\log n$.
As the parameters of this range depend on the hidden asymptotic constants,
it is not clear whether our data sets are within this range.
Finally, if the number of clones initially selected by the algorithm
is less than $s$, our implementation of both algorithms
adds some arbitrary clones to increase their number to $s$.
Since {\RCMtwo} uses slightly smaller
probabilities in the rounding scheme, it tends to choose initially fewer
clones, and thus it is also likely to add more of these arbitrary
clones.
The performance of {\RCMtwo} relative to {\RCM} would probably
be improved with additional random sampling.
(The same arguments apply to {\RCA} and {\RCAtwo} as well.)

Our experiments were performed on a machine with Intel Pentium 4
$2.4$GHz CPU and $1$GB RAM. The total running time for each single
run of {\RCM} or {\RDM} on these synthetic and real data sets was
in the range of $20$-$80$ seconds, which is practically acceptable.


\smallskip

\paragraph{Example.}
To complement the above statistics with a more concrete
example, we now describe the results
of {\RCM} on a typical data set. Here we used {\RCM} to
compute a set $D$ of $s = 100$ control clones for
$m = 500$ bacterial clones with a set of $n = 30$ probes.
The distribution of the degrees of the probes with
respect to $D$ is given in the table below:

\begin{center}
\begin{tabular}{l|c|c|c|c|c|c|}
degree &  0 --30 & 31 -- 40 & 41 -- 50 & 51 -- 60
                         & 61 -- 70 & 71 -- 100
                                 \\
n. of probes
                  & 0 & 14 & 7 & 4 & 5 & 0
\end{tabular}
\end{center}

The minimum and maximum degrees of probes in $D$ were
$37$ and $68$, respectively, thus producing the objective
value $\Cmin{}(D) = 32$ for this instance. Thus this $D$ is
a high quality control clone set.


\section{Concluding Remarks}
\label{sec: conclusion}

We performed similar experiments for other algorithms provided
in this paper, and the results were equally promising. Overall, our
work demonstrates that randomized rounding is a very effective
method for solving all versions of Balanced Covering, especially on
real data sets. In the actual implementation available at
{\tt http://algorithms.cs.ucr.edu/OFRG/}, the solution of
{\RCM} is fed as
an initial solution into a simulated-annealing algorithm.
We found out that the simulated annealing rarely produces
any improvement of this initial solution, which provides
further evidence for the effectiveness of randomized rounding in this case.
(In contrast, when we run simulated annealing from a random
initial solution, in a typical run, it takes approximately 10 minutes
to find a solution that is about 80\% as good as that
of {\RCM}.)

We remark that (by creating two copies of the matrix and inverting
the bits in the second copy) {\BCPCmin} can be reduced to a more
general problem where we want to cover all columns with the maximum
number of $1$'s. Our algorithms and their analyses apply to this
problem as well.


\section*{Acknowledgments}
This work is supported by NSF Grant BD\&I-0133265.
Work of Qi Fu and Marek Chrobak is partially
supported by NSF Grant CCR-0208856.

We would like to thank the anonymous referees for invaluable
suggestions that helped us improve the presentation of this work.


\appendix
\section{Appendix: Proof of Lemma~\ref{lem: Chernoff-Young}}
\label{sec: proof of chernoff-young}

\begin{proof}
Let $c(\epsilon) = e^{\epsilon} / (1+\epsilon)^{1+\epsilon}$
and $f(x) = -\ln (c(x)) / x$. We will first show
\begin{equation}
 \int_{x=\epsilon}^{+\infty} c(x)^\mu \,dx
 \;\le\;
 \frac {2c(\epsilon)^\mu} {\mu \ln (1+\epsilon)}.
 \label{eqn: Chernoff-Young claim}
\end{equation}
Integrating both sides of the inequality $1+\ln (1+x) \le 1+x$, for $x\ge 0$,
we get $(1+x)\ln (1+x) \le x(1+x/2)$. By simple algebra, we then get
$f'(x) \ge (\ln (1+x)/2)'$, and thus $f(x)$
is an increasing function and $f(x) \ge \ln (1+x) /2$.
We can now verify (\ref{eqn: Chernoff-Young claim}) as follows,
\begin{eqnarray*}
 \int_{x=\epsilon}^{+\infty} c(x)^\mu \,dx
    & =   & \int_{x=\epsilon}^{+\infty} e^{-\mu x f(x)} \,dx \\
    & \le & \int_{x=\epsilon}^{+\infty} e^{-\mu x f(\epsilon)} \,dx \\
    & =   & \frac{ e^{-\mu \epsilon f(\epsilon)} } {\mu f(\epsilon)} \\
    & =   & \frac{ c(\epsilon) ^\mu } {\mu f(\epsilon)} \\
    & \le & \frac{ 2c(\epsilon) ^\mu } {\mu \ln (1+ \epsilon)}.
\end{eqnarray*}
We have
\begin{eqnarray*}
    \Exp[\max \{0, Y-(1+\epsilon)\mu \} ]
        &=& \int_{y=0}^{+\infty} \Pr[Y- (1+\epsilon)\mu \ge y] \,dy.
\end{eqnarray*}
Choose $\delta$ so that $(1+\delta)\mu = (1+\epsilon) \mu + y$. Changing variables
from $y$ to $(\delta-\epsilon)\mu$, and applying a standard Chernoff bound,
the expected value above becomes
\begin{eqnarray*}
    \Exp[\max \{0, Y-(1+\epsilon)\mu \} ]
     &=& \mu \int_{\delta = \epsilon}^{+\infty} \Pr[Y \ge (1+\delta)\mu] \,d\delta
                \\
    &\le& \mu \int_{\delta = \epsilon}^{+\infty} c(\delta)^\mu \,d\delta.
\end{eqnarray*}
Combining this, inequality (\ref{eqn: Chernoff-Young claim}) and the fact that
$c(\epsilon) ^\mu \le e^{-\mu \epsilon ^2 /4}$ for $0< \epsilon \le 1$,
we get
\begin{equation*}
 \Exp[\max \{0, Y-(1+\epsilon)\mu \} ]
 \;\le\;
 \frac{ 2c(\epsilon) ^\mu } {\ln (1+ \epsilon)}
 \;\le\;
 \frac{2e^{-\mu \epsilon ^2 /4}}{\ln (1+ \epsilon)},
\end{equation*}
and the lemma follows.
\end{proof}


\bibliographystyle{plain}

\small{
\bibliography{bib}

\begin{thebibliography}{10}

\bibitem{berger1989ena}
B.~Berger, J.~Rompel, and P.~Shor.
\newblock Efficient {NC} algorithms for set cover with applications to learning
  and geometry.
\newblock {\em 30th Annual Symposium on the Foundations of Computer Science},
  pages 54--59, 1989.

\bibitem{lpsolve}
M.~Berkelaar, K.~Eikland, and P.~Notebaert.
\newblock {\em Lp\_solve mixed integer linear programming solver 5.5}, 2004.
\newblock Available at {\tt http://lpsolve.sourceforge.net/5.5}.

\bibitem{James01}
J.~Borneman, M.~Chrobak, G.D. Vedova, A.~Figueroa, and T.~Jiang.
\newblock Probe selection algorithms with applications in the analysis of
  microbial communities.
\newblock {\em Bioinformatics}, 17(1):S39--S48, 2001.

\bibitem{Feige98Threshold}
U.~Feige.
\newblock A threshold of ln n for approximating {Set Cover}.
\newblock {\em Journal of the ACM (JACM)}, 45(4):634--652, 1998.

\bibitem{Andres04}
A.~Figueroa, J.~Borneman, and T.~Jiang.
\newblock Clustering binary fingerprint vectors with missing values for {DNA}
  array data analysis.
\newblock {\em Journal of Computational Biology}, 11(5):887--901, 2004.

\bibitem{Garey79}
M.R. Garey and D.S. Johnson.
\newblock {\em Computers and Intractability. A Guide to the Theory of
  NP-Completeness}.
\newblock W.H.Freeman, New York, 1979.

\bibitem{gargano2003mgo}
L.~Gargano, AA. Rescigno, and U.~Vaccaro.
\newblock Multicasting to groups in optical networks and related combinatorial
  optimization problems.
\newblock {\em International Parallel and Distributed Processing Symposium},
  page 223, 2003.

\bibitem{Jampa05}
K.~Jampachaisri, L.~Valinsky, J.~Borneman, and S.~J. Press.
\newblock Classification of oligonucleotide fingerprints: application for
  microbial community and gene expression analyses.
\newblock {\em Bioinformatics}, 21(14):3122--3130, 2005.

\bibitem{Johan00}
J.~Schuchhardt, D.~Beule, A.~Malik, E.~Wolski, H.~Eickhoff, H.~Lehrach, and
  H.~Herzel.
\newblock Normalization strategies for {cDNA} microarrays.
\newblock {\em Nucleic Acids Res.}, 28(10):e47, 2000.

\bibitem{Valin04}
L.~Valinsky, A.~Scupham, G.D. Vedova, Z.~Liu, A.~Figueroa, K.~Jampachaisri,
  B.~Yin, E.~Bent, R.~Mancini-Jones, J.~Press, T.~Jiang, and J.~Borneman.
\newblock Oligonucleotide fingerprinting of ribosomal {RNA} genes {(OFRG)}.
\newblock In G.A. Kowalchuk, F.J. de~Bruijn, I.M. Head, A.D. Akkermans, and
  J.D.~Van Elsas, editors, {\em Molecular Microbial Ecology Manual}, pages
  569--585. Kluwer Academic Publishers, Dordrecht, The Netherlands, 2nd
  edition, 2004.

\bibitem{Valin02a}
L.~Valinsky, G.~Della Vedova, T.~Jiang, and J.~Borneman.
\newblock Oligonucleotide fingerprinting of ribosomal {RNA} genes for analysis
  of fungal community composition.
\newblock {\em Applied and Environmental Microbiology}, 68(12):5999--6004,
  2002.

\bibitem{Valin02b}
L.~Valinsky, G.~Della Vedova, A.~Scupham, S.~Alvey, A.~Figueroa, B.~Yin,
  R.~Hartin, M.~Chrobak, D.~Crowley, T.~Jiang, and J.~Borneman.
\newblock Analysis of bacterial community composition by oligonucleotide
  fingerprinting of {rRNA} genes.
\newblock {\em Applied and Environmental Microbiology}, 68(7):3243--3250, 2002.

\bibitem{young00kmedians}
N.~E. Young.
\newblock K-medians, facility location, and the {Chernoff-Wald} bound.
\newblock {\em ACM-SIAM Symposium on Discrete Algorithms}, pages 86--95, 2000.

\bibitem{Yu03}
W.~Yu, B.C. Ballif, C.D. Kashork, H.A. Heilstedt, L.A. Howard, W.W. Cai, L.D.
  White, W.~Liu, A.L. Beaudet, B.A. Bejjani, C.A. Shaw, and L.G. Shaffer.
\newblock Development of a comparative genomic hybridization microarray and
  demonstration of its utility with 25 well-characterized 1p36 deletions.
\newblock {\em Human Molecular Genetics}, 12(17):2145--2152, 2003.

\end{thebibliography}
}


\end{document}